\newcommand*{\addFileDependency}[1]{
  \typeout{(#1)}
  \@addtofilelist{#1}
  \IfFileExists{#1}{}{\typeout{No file #1.}}
}
\newcommand*{\myexternaldocument}[1]{%
    \externaldocument{#1}%
    \addFileDependency{#1.tex}%
    \addFileDependency{#1.aux}%
}
\newcommand{\indicat}{\mathbbm{1}}
\newtheoremstyle{propstyle} 
    {3mm}                    
    {1mm}                    
    {\itshape}                   
    {}                           
    {\scshape}                   
    {.}                          
    {.5em}                       
    {}  
\theoremstyle{propstyle}
\newtheorem{prop}{Proposition}
\theoremstyle{propstyle}
\theoremstyle{propstyle}
\newsavebox\ideabox
\newcommand{\ba}{\mathbf{a}}
\newcommand{\bb}{\mathbf{b}}
\newcommand{\bs}{\mathbf{s}}
\newcommand{\bx}{\mathbf{x}}
\newcommand{\by}{\mathbf{y}}
\newcommand{\bS}{\mathbf{S}}
\newcommand{\bz}{\mathbf{z}}
\newcommand{\bW}{\mathbf{W}}
\newcommand{\bI}{\mathbf{I}}
\newcommand{\bD}{\mathbf{D}}
\newcommand{\bH}{\mathbf{H}}
\newcommand{\bU}{\mathbf{U}}
\newcommand{\bV}{\mathbf{V}}
\newcommand{\bK}{\mathbf{K}}
\newcommand{\bQ}{\mathbf{Q}}
\newcommand{\bC}{\mathbf{C}}
\newcommand{\bM}{\mathbf{M}}
\newcommand{\all}{\bullet}
\newcommand{\bfzero}{\mathbf{0}}
\newcommand{\bfone}{\mathbf{1}}
\newcommand{\bfmu}{\bm{\mu}}
\newcommand{\bftheta}{\bm{\theta}}
\newcommand{\bfSigma}{\bm{\Sigma}}
\DeclareMathOperator{\E}{\mathbb{E}}
\DeclareMathOperator{\var}{var}
\DeclareMathOperator{\cov}{cov}
\DeclareMathOperator{\diag}{diag}
\DeclareMathOperator{\chol}{chol}
\DeclareMathOperator{\rchol}{rchol}
\DeclareMathOperator{\rev}{rev}
\DeclareMathOperator{\blockdiag}{blockdiag}
\newcommand{\KL}{\textnormal{KL}}
\newcommand{\CKL}{\textnormal{CKL}}
\newcommand{\GP}{GP}
\newcommand{\normal}{\mathcal{N}}
\newcommand{\order}{\mathcal{O}}
\newcommand{\domain}{\mathbb{D}} 
\newcommand{\dens}{f}
\newcommand{\adens}{\widehat{f}}
\newcommand{\locs}{\mathcal{S}}
\DeclareMathOperator*{\argmin}{arg\,min}
\DeclareMathOperator*{\argmax}{arg\,max}
\newcommand{\cmark}{\ding{51}}
\newcommand{\xmark}{\ding{55}}
\title{~ \\ \vspace{-15mm}
Vecchia approximations of Gaussian-process predictions}
\author{Matthias Katzfuss\thanks{Department of Statistics, Texas A\&M University} \thanks{Corresponding author: \texttt{katzfuss@gmail.com}} \and Joseph Guinness\thanks{Department of Statistics and Data Science, Cornell University}
\and Wenlong Gong\footnotemark[1]
\and Daniel Zilber\footnotemark[1]}
\date{}
\begin{document}

\maketitle

\begin{abstract}
Gaussian processes (GPs) are highly flexible function estimators used for geospatial analysis, nonparametric regression, and machine learning, but they are computationally infeasible for large datasets. Vecchia approximations of GPs have been used to enable fast evaluation of the likelihood for parameter inference. Here, we study Vecchia approximations of spatial predictions at observed and unobserved locations, including obtaining joint predictive distributions at large sets of locations. We consider a general Vecchia framework for GP predictions, which contains some novel and some existing special cases. We study the accuracy and computational properties of these approaches theoretically and numerically, proving that our new methods exhibit linear computational complexity in the total number of spatial locations. We show that certain choices within the framework can have a strong effect on uncertainty quantification and computational cost, which leads to specific recommendations on which methods are most suitable for various settings. We also apply our methods to a satellite dataset of chlorophyll fluorescence, showing that the new methods are faster or more accurate than existing methods, and reduce unrealistic artifacts in prediction maps.
\end{abstract}

{\small\noindent\textbf{Keywords:} computational complexity; kriging; large datasets; sparsity; spatial statistics}

\section{Introduction \label{sec:intro}}

Gaussian processes (GPs) are popular models for functions, time series, and spatial fields, with many application areas such as geospatial analysis \citep[e.g.,][]{Banerjee2004,Cressie2011}, nonparametric regression and machine learning \citep[e.g.,][]{Rasmussen2006}, the analysis of computer experiments \citep[e.g.,][]{Kennedy2001}, and Bayesian optimization of expensive functions \citep{Jones1998} and of the tuning parameters in neural networks \citep[e.g.,][]{Snoek2012}. Here, we focus on spatial prediction using GPs.
GPs are flexible, interpretable, allow natural probabilistic quantification of uncertainty, and are thus well-suited for big-data applications in principle.
However, direct application of GPs incurs computational cost that is cubic in the data size, which is too expensive for many modern datasets of interest.

To deal with this computational problem, numerous GP approximations or simplifying assumptions have been proposed. These include imposing sparsity on covariance matrices \citep{furrer2006covariance,kaufman2008covariance,Du2009}, sparsity on precision matrices \citep{rue2005gaussian,lindgren2011explicit,Nychka2012}, composite likelihoods \citep[e.g.,][]{curriero1999composite,Stein2004,Eidsvik2012}, and low-rank structure \citep[e.g.,][]{Higdon1998, Wikle1999, Quinonero-Candela2005, Banerjee2008, Cressie2008, Katzfuss2010,tzeng2018resolution}. While low-rank approaches are poorly suited for capturing fine-scale dependence, sparsity-based approaches can generally not guarantee linear scaling in the data or grid size, especially in higher dimensions. Local GP approximations \citep[e.g.,][]{Gramacy2015} are fast but do not scale well to joint predictions at many locations.

We focus on Vecchia approximations, which obtain a sparse Cholesky factor of the precision matrix by removing conditioning variables in a factorization of the joint density of the GP observations into a product of conditional distributions \citep{Vecchia1988}. This approach has become very popular for likelihood approximations for parameter inference \citep[e.g.,][]{Stein2004,Sun2016,Guinness2016a}.
For the typical setting of GP observations that include additive noise, \citet{Katzfuss2017a} consider a general Vecchia framework that applies the Vecchia approximation to a vector consisting of both the latent GP realizations and the noisy data. This framework contains many other popular GP approximations as special cases \citep[e.g.,][]{Snelson2007,Finley2009,Sang2011a,Datta2016,Katzfuss2015,Katzfuss2017b}.

Several authors have also proposed the use of Vecchia approximations for the important task of GP prediction, also referred to as kriging. The approach in \citet{Vecchia1992} for one-at-a-time Vecchia predictions has squared time complexity in the number of observations. \citet{Datta2016} and \citet{Finley2017} proposed Bayesian inference and prediction based on Vecchia-type approximations, which we will discuss and compare to in detail in the present paper. \citet{Guinness2016a} considered prediction using conditional expectation and uncertainty quantification using conditional simulations in a Vecchia approach based solely on conditioning on observed variables. This is relatively computationally cheap, but uncertainty measures contain random simulation error, and the observed conditioning might not provide accurate approximations in the presence of noise \citep[cf.][]{Katzfuss2017a}. Vecchia approximations have also been employed as preconditioners in iterative solvers that are used in prediction \citep{stroud2017bayesian}, but this approach is feasible only if prediction is desired at a small number of locations, or if additional approximations are made \citep{guinness2017spectral}. The multi-resolution approximation \citep{Katzfuss2015,Katzfuss2017b} and related approaches relying on domain partitioning \citep[e.g.,][]{Sang2011a,Zhang2019}, shown in \citet{Katzfuss2017a} to be special cases of Vecchia approximations, also provide fast GP prediction, but they can lead to artifacts along partition boundaries. We will discuss these connections and provide numerical comparisons.

Our article synthesizes and extends the literature on Vecchia approximations of spatial GP predictions, in particular the use of the general Vecchia approximation \citep{Katzfuss2017a} for marginal and joint predictive distributions. Extension of the general Vecchia framework to GP prediction was not considered in \citet{Katzfuss2017a} and requires consideration of complex issues, including how to order variables and choose conditioning sets to achieve accurate joint predictions at observed and unobserved locations, and how to guarantee fast computation of relevant summaries of the joint predictive distribution.
Here, we systematically study these issues with regard to the accuracy of the approximations and their computational burden. We introduce novel approaches within the framework, for which we can guarantee sparsity of the matrices necessary for inference, resulting in linear memory and time complexity in the number of data points and predictions for fixed conditioning-set size. Our framework enables systematic discussion, study, and comparison of our new methods and existing approaches, based on which we make specific recommendations about which methods are most suitable in various situations. Our framework is agnostic with respect to the inferential framework, allowing both frequentist and Bayesian inference on potential hyperparameters. We focus on the approximation of predictive distributions conditional on hyperparameters, to avoid confounding with the choice of hyperparameter priors or inference approaches.

Answering scientific questions sometimes requires quantifying the uncertainty of linear combinations or other functions of multiple predictions.
For example, climate scientists are interested in global average temperature, hydrologists consider the total rainfall in a catchment area, and carbon-cycle scientists want to infer CO$_2$ surface fluxes from kriged maps of atmospheric CO$_2$ concentrations. 
Joint predictive distributions at a set of prediction locations are required to quantify uncertainties of these spatial averages, totals, or other follow-up or ``downstream'' analyses. Our article details how to compute prediction variances for linear combinations of predictions under the general Vecchia approximation.
We also consider prediction of the latent process at observed locations, which is useful in spatial smoothing and in a Vecchia-Laplace approximation of generalized GPs for non-Gaussian spatial data \citep{Zilber2019}.

This article is organized as follows. Section \ref{sec:GP} reviews GP prediction. In Section \ref{sec:vecchia}, we introduce general Vecchia approximations of GP prediction. In Section \ref{sec:methods}, we discuss specific methods and study their properties. Sections \ref{sec:comparison} and \ref{sec:realdata} provide numerical comparisons using simulated and real data, respectively. We conclude in Section \ref{sec:conclusions}. Appendices \ref{app:notation}--\ref{app:proofs} contain details and proofs. A separate Supplementary Material document contains
Sections \ref{sec:compadditional}--\ref{sec:satsupp} with additional details, plots, comparisons, and a description of another Vecchia prediction method.
The proposed methods are implemented in the \texttt{R} package \texttt{GPvecchia} \citep{GPvecchia}. Code to reproduce our results will be provided with the published article.


\section{Exact Gaussian-process prediction \label{sec:GP}}

The process of interest is denoted by $\{y(\bs) \!: \bs \in \domain\}$, or $y(\cdot)$, on a continuous (i.e., non-gridded) domain $\domain \subset \mathbb{R}^d$, $d \in \mathbb{N}^+$. We assume that $y(\cdot) \sim \GP(0,K)$ is a Gaussian process (GP) with mean zero and covariance function $K: \domain \times \domain \to \mathbb{R}$, which is assumed known up to some parameters.
Let $\bs_i \in \domain$ for $i=1,\ldots,n$, and define the location vector $\locs = ( \bs_1,\ldots,\bs_n )$. For simplicity, we assume throughout that the locations in $\locs$ are unique.
Define $y_i = y(\bs_i)$ and the vectors $\by = (y_1,\ldots,y_n)$ and $\bz = (z_1,\ldots,z_n)$. The response variables $z_i$ are noisy versions of latent $y_i$: $z_i | \by \sim \normal(y_i, \tau^2_i)$ independently for all $i$.
Thus, the covariance matrix of $\by$ is $\bK = K(\locs,\locs)$, and the covariance matrix of $\bz$ is $\bC = \bK + \bD$, where $\bD$ is a diagonal matrix containing the noise or nugget variances, $\bD_{ii}=\tau_i^2$.
Define the index vector $o \subset (1,\ldots,n)$ of length $n_O=|o|$ such that the subvector $\bz_o$ contains all observed response variables (i.e., the data), and $\locs_o$ represents the vector of observed locations. (We use the vector and indexing notation described in Appendix \ref{app:notation}.) We also define $p = (1,\ldots,n) \setminus o$ to be an index vector of length $n_P = |p| = n-n_O$, such that $\locs_p$ is the vector of unobserved (prediction) locations \citep[cf.][]{Le2006}. In summary, we have:
\begin{center}
\begin{tabular}{r| l}
Notation & Terminology \\
\hline
$o \subset (1,\ldots,n)$ & vector of indices of \textbf{observed} locations \\
$p = (1,\ldots,n) \setminus o$ & vector of indices of (unobserved) \textbf{prediction} locations \\
$\by$, $\by_o$, $\by_p$ & vectors of \textbf{latent} variables \\
$\bz$, $\bz_o$, $\bz_p$ & vectors of \textbf{response} variables \\
\end{tabular}
\end{center}

Inference on unknown parameters $\bftheta$ in $K$ and $\tau_i^2$ can be carried out based on the multivariate normal likelihood,
$
\dens(\bz_o) = \normal_{n_O}(\bz_o|\bfzero,\bC_{oo}),
$
or approximations thereof.

The goal for prediction is to obtain the posterior predictive distribution of $\by$ 
via
\begin{equation}
\label{eq:predint}
\textstyle \dens(\by | \bz_o) = \int \dens(\by|\bz_o,\bftheta) \, dF(\bftheta|\bz_o).
\end{equation}
The density $\dens(\by|\bz_o,\bftheta)$ is normal with mean $\bm{\mu}(\bftheta) = \bK_{\all o}\bC_{oo}^{-1}\bz_o$ and covariance matrix
\begin{equation}
\label{eq:predcov}
\bfSigma(\bftheta) = \bK - \bK_{\all o} \bC_{oo}^{-1}\bK_{o \all},
\end{equation}
where $\bK$ and $\bC$ implicitly depend on $\bftheta$, and $\all$ denotes the vector of all indices.
When using maximum-likelihood estimation, the posterior distribution $F(\bftheta|\bz_o)$ of the parameters is effectively approximated by a point mass at $\bftheta = \hat\bftheta$ in \eqref{eq:predint}, and so $\dens(\by | \bz_o) = \normal(\by |\bfmu(\hat\bftheta),\bfSigma(\hat\bftheta))$. For Bayesian inference using MCMC, the parameter posterior in \eqref{eq:predint} is approximated as discrete uniform on, say, $\bftheta^{(1)},\ldots,\bftheta^{(L)}$, and so $\dens(\by | \bz_o) = (1/L) \sum_l \normal(\by |\bfmu(\bftheta^{(l)}),\bfSigma(\bftheta^{(l)}))$, which is often further approximated by samples from the summands.
Therefore, for both inferential paradigms, GP prediction requires obtaining $\dens(\by|\bz_o,\bftheta) = \normal(\by |\bfmu(\bftheta),\bfSigma(\bftheta))$ for particular fixed values of $\bftheta$. Here and in the following, we will thus suppress dependence on $\bftheta$ and regard it as fixed, unless stated otherwise. Sometimes (e.g., for cross validation), interest might also be in predicting $\bz_p$, but this is a trivial extension of predicting $\by_p$, in that $\bz_p | \bz_o \sim \normal(\bfmu_p,\bfSigma_{pp}+\bD_{pp})$.

Point predictions at individual locations are functions of the marginal distributions $y_i | \bz_o$ only, but quantifying the uncertainty of linear combinations (e.g., spatial averages) and generating posterior simulations require the joint posterior distribution of $\by$ given $\bz_o$.
While GP prediction is mathematically straightforward, it can be computationally expensive. The time complexity for obtaining the entire matrix $\bfSigma$ in \eqref{eq:predcov} is $\order(n_O^3 + n n_O^2 + n^2 n_O)$, and even just obtaining its diagonal elements (i.e., the prediction variances) requires $\order(n_O^3 + n n_O^2)$ time. Thus, GP prediction is computationally infeasible for large $n_O$ or $n$, and approximations or simplifying assumptions are necessary.

\section{The general Vecchia framework for GP prediction\label{sec:vecchia}}

\subsection{Definition of the framework}

The density of any random vector $\bx$ can be factored exactly as $\dens(\bx) = \allowbreak \prod_{i} \dens(x_i|x_1,\ldots,x_{i-1})$. This motivates a general Vecchia approximation \citep{Katzfuss2017a} for GP prediction, which applies Vecchia's approximation \citep{Vecchia1988} to the vector $\bx = \bz_o \cup \by$:
\begin{equation}
\label{eq:gvp}
\textstyle \adens(\bx) = \allowbreak \prod_{i=1}^{n+n_O} \dens(x_i|\bx_{g(i)}),
\end{equation}
where $g(i) \subset (1,\ldots,i-1)$ is a conditioning index vector of size $|g(i)|$, often formed based on variables with locations nearby in space to $x_i$. If $g(i) = (1,\ldots,i-1)$ for every $i$, then the exact distribution is recovered: $\adens(\bx) = \dens(\bx)$. If $|g(i)|$ is bounded by some small integer $m \ll n$, the approximation can lead to enormous computational savings, because only matrices of size $m \times m$ need to be decomposed to evaluate \eqref{eq:gvp}. Recent results \citep{Schafer2020} indicate that in some settings the approximation error can be bounded with $m$ increasing only polylogarithmically in $n$. Because the general Vecchia approximation $\adens(\bx)$ is a valid probability distribution (e.g., \citealp{Datta2016}, App.~A; \citealp{Katzfuss2017a}, Prop.~1), it can be used for approximating the posterior predictive distribution $\dens(\by|\bz_o)$ by applying the rules of probability to $\adens(\bx)$ to obtain $\adens(\by|\bz_o)$. 
If predictions at additional locations are desired later, the corresponding realizations of $y(\cdot)$ can be appended to the end of $\bx$ for consistency (see Section \ref{sec:consistent}).

The accuracy of a general Vecchia approximation depends on the choice of the ordering of the variables in $\bx$ and the specification of the conditioning index vectors $g(i)$. The computational efficiency of the approximation is governed by several factors, including the sparsity of the precision matrix for $\by|\bz_o$ and its Cholesky factor. For a given $m$, \citet{Katzfuss2017a} showed that there is often a trade-off in conditioning on latent versus response variables, in that it can be more accurate but also more computationally expensive to condition on $y_k$ rather than on $z_k$. In Section \ref{sec:methods}, we study how ordering and conditioning choices affect both the quality of the approximation and its computational burden, for the purpose of providing practical guidelines for using Vecchia's approximation for spatial prediction.

\subsection{Matrix representations \label{sec:matrix}}

In this subsection, we introduce matrix notation and recapitulate existing results \citep[e.g.,][]{Katzfuss2017a}. Let $\bQ$ be the precision matrix for $\bx$ under $\adens(\bx)$. The joint distribution implied by the approximation in \eqref{eq:gvp} is multivariate normal, $\adens(\bx) = \normal(\bfzero,\bQ^{-1})$. Define $\chol(\bM)$ to return the (lower-triangular) Cholesky factor of $\bM$, $\rev(\bM)$ to return the reverse row-column reordering of $\bM$ and $\rchol(\bM) = \rev(\chol(\rev(\bM)))$, which gives the (upper-triangular) upper-lower decomposition for $\bM$. Our notation for the relevant matrices is the following:
\begin{align*}
\begin{array}{l|l}
\mbox{Notation}  & \mbox{Note} \\
\hline
\ell = \#(\by,\bx) & \mbox{indices in } \bx \mbox{ occupied by latent variables } \by \mbox{ (i.e., } \by = \bx_\ell\mbox{)} \\
r = \#(\bz_o,\bx) & \mbox{indices in } \bx \mbox{ occupied by response variables } \bz_o \mbox{ (i.e., } \bz_o = \bx_r\mbox{)} \\
\bU = \rchol(\bQ) & \mbox{upper-lower Cholesky decomposition of } \bQ\; (\mbox{i.e., } \bQ = \bU \bU'\mbox{)} \\
\bW = \bQ_{\ell\ell} = \bU_{\ell,\all}\bU_{\ell,\all}' & \mbox{posterior precision matrix of } \by \mbox{ given } \bz_o \\
\bV = \rchol(\bW) & \mbox{upper-lower Cholesky decomposition of } \bW\; (\mbox{i.e., } \bW = \bV \bV')
\end{array}
\end{align*}
In practice, there is no need to construct the matrix $\bQ$; rather, we compute the nonzero entries of $\bU$ directly via the methods outlined in Appendix \ref{app:computeU}. From the expressions in Appendix \ref{app:computeU}, it is easy to see that $\bU$ is sparse with at most $m$ off-diagonal nonzero entries per column, and $\bU$ can be computed in $\order(nm^3)$ time.

\subsection{General Vecchia predictions \label{sec:predictions}}

The goal for GP prediction is to obtain the posterior predictive distribution of $\by$ given the response $\bz_o$, or desired summaries of this distribution.
As explained in Appendix \ref{sec:predictionsunknown}, it suffices to consider this distribution for certain values of the parameters $\bftheta$, which we again suppress for notational simplicity.
General Vecchia prediction approximates the exact conditional distribution $f(\by|\bz_o)$ as implied by the joint distribution $\adens(\bx)$ in \eqref{eq:gvp} with $\bx = \by \cup \bz_o$:
\[
\adens( \by | \bz_o ) = \frac{\adens(\bx)}{\int \adens(\bx)d\by} \equalscolon \normal_{n}(\bfmu,\bfSigma).
\]
Since $\bW = \bQ_{\ell\ell}$ is the submatrix corresponding to $\by$ of the full precision matrix $\bQ$ of $\bx = \by \cup \bz_o$, it is a well-known property of precision matrices that $\bfSigma = \bW^{-1}$. While the posterior precision matrix $\bW$ is sparse, the covariance matrix $\bfSigma$ will generally be a dense $n \times n$ matrix. Thus, it is infeasible to actually compute and store this entire matrix when $n$ is large. However, quantities of interest in the context of prediction can be computed using the general Vecchia approximation as follows:
\begin{enumerate}[itemsep=1pt,topsep=3pt]
\item The posterior mean or kriging predictor $\bfmu = \E(\by|\bz_o)$: It is straightforward to show that $\bfmu = -(\bV')^{-1}\bV^{-1}\bU_{\ell,\all} \bU_{r,\all}'\bz_o$ \citep[][proof of Prop.~2]{Katzfuss2017a}.
\item The prediction variances $\diag(\bfSigma)=(\var(y_1 | \bz_o),\ldots,\var(y_n | \bz_o))$: Based on $\bV$, a selected inversion algorithm, also referred to as the Takahashi recursions \citep{Erisman1975,Li2008,Lin2011}, can be used to compute $\bfSigma_{ij}$ for all pairs $i,j$ with $\bW_{ij} \neq 0$. Thus, it also returns the prediction variances $\bfSigma_{ii}$.
\item The joint posterior distribution of linear combinations (e.g., spatial averages): $\bH\by | \bz_o \sim \normal_k\big(\bH\bfmu,(\bV^{-1}\bH')'(\bV^{-1}\bH')\big)$, where $\bH$ is $k \times n$. As $\bV^{-1}\bH'$ is generally dense, only a moderate $k$ is computationally feasible. The variances of linear combinations can be computed faster, as $\diag(\var(\bH\by|\bz_o)) = ((\bV^{-1}\bH')\circ(\bV^{-1}\bH'))'\bfone_{n}$, where $\circ$ denotes element-wise multiplication and $\bfone_n$ is an $n$-vector of ones.
\item Conditional simulation from the posterior predictive distribution $\normal(\bfmu,\bfSigma)$: Draw $n$ i.i.d.\ samples $a_i \sim \normal(0,1)$ from the standard normal distribution, set  $\ba= (a_1,\ldots,a_{n})'$ and $\by^* = \bfmu + (\bV')^{-1}\ba$. Then $\by^* \sim \normal(\bfmu,\bfSigma)$.
\end{enumerate}
All of these tasks require computation of $\bV = \rchol(\bW)$ from $\bU$. The cost of this Cholesky factorization depends on the number of nonzero entries per column in $\bV$. In general, it is crucial for fast predictions for large $n$ that $\bV$ is sufficiently sparse. Computational complexity is discussed in Section \ref{sec:complexity} in more detail.

\section{Specific methods and their properties\label{sec:methods}}

We now consider two different ordering schemes, response-first and latent-first ordering, within the general Vecchia framework for fast GP prediction, along with several special cases. The methods are summarized in Table \ref{tab:methods} and illustrated in Figure \ref{fig:toyillustration}. In Section \ref{sec:sgvp}, we consider an additional approach based on a third ordering scheme, which is an extension of the sparse general Vecchia likelihood approximation of \citet{Katzfuss2017a}, but this approach is less suitable for GP prediction. The general Vecchia framework allows a coherent, systematic discussion and comparison of the different methods, which has so far been lacking in the literature on scalable GP predictions.

In our framework, the vector $\locs$ is obtained based on an ordering of the unordered set of locations $\{ \bs_1,\ldots,\bs_n \}$. For most of the methods below, we assume an observed-prediction (OP) restriction, meaning that the observed locations are ordered first and prediction locations last; that is, $o=(1,\ldots,n_O)$ and $p=(n_O+1,\ldots,n)$. Unless stated otherwise, we recommend and use a maximum-minimum distance (maxmin) ordering \citep{Guinness2016a,Schafer2017} constrained to order the prediction locations last. The latent and response variables are then ordered according to how the locations are ordered, with $y_i$ and $z_i$ corresponding to $\bs_i$. After ordering $\by$ and $\bz$ according to the locations, we must consider the separate issue of how to order $\by$ and $\bz_o$ within $\bx$. We study the impact that this choice has on approximation accuracy and computational cost.


This section also studies the impact of how the conditioning sets are chosen. The Vecchia approximation assumes that each variable $x_i$ only conditions on variables that are ordered previously in $\bx$. Of those previously ordered variables, we recommend considering those corresponding to the nearest $m$ locations, potentially under additional restrictions. However, if $\bs_j$ is one of the nearest $m$ locations, and both $y_j$ and $z_j$ are ordered before $x_i$, then we only consider one of them, because of conditional independence of $x_i$ and $z_j$ given $y_j$. 
Similarly, if $y_i$ is ordered before $z_i$, then $z_i$ will only condition on $y_i$, because $z_i$ is conditionally independent of all other variables in $\bx$ given $y_i$.


\begin{table}
\centering
\begin{tabular}{l|l|l|l|l|l}
Method & related to & reference & cnvrg. & linear & recommended \\
\hline
RF-full   &                      &                         & \cmark & \cmark & 2D, large $n_O, n_P$\\
RF-stand  & standard Vecchia     & \citet{Guinness2016a}   & \cmark & \cmark & 2D, $n_P \ll n_O$ \\
RF-ind    & NNGPR, local kriging & \citet{Finley2017}      & \xmark & \cmark &        \\
LF-full   & NNGP with ref.\ set $\locs$ & \citet{Datta2016}& \cmark & \xmark & 2D, small $n_O$ \\
LF-ind    & NNGPC                & \citet{Finley2017}      & \xmark & \xmark &       \\
LF-auto   &                      &                         & \cmark & \cmark & 1D    \\  
\end{tabular}
\caption{Summary of considered methods, with details on response-first ordering (RF) given in Section \ref{sec:responsefirstordering} and on latent-first ordering (LF) in Section \ref{sec:latentfirstordering}. cnvrg.: convergence to exact joint distribution, $\adens(\by|\bz_o) \rightarrow \dens(\by|\bz_o)$ as conditioning-set size $m \rightarrow n$; linear: computational complexity guaranteed to be linear in $n$ for fixed $m$; NNGP: nearest-neighbor Gaussian process; NNGPR: NNGP-response; NNGPC: NNGP-collapsed.}
\label{tab:methods}
\end{table}

\begin{figure}
\hspace*{\fill}
 \begin{subfigure}{.13\textwidth}
  \small RF-full
 \end{subfigure}
 \begin{subfigure}{.6\textwidth}
  \includegraphics[trim={62mm 15mm 62mm 15mm},clip,width =1\linewidth]{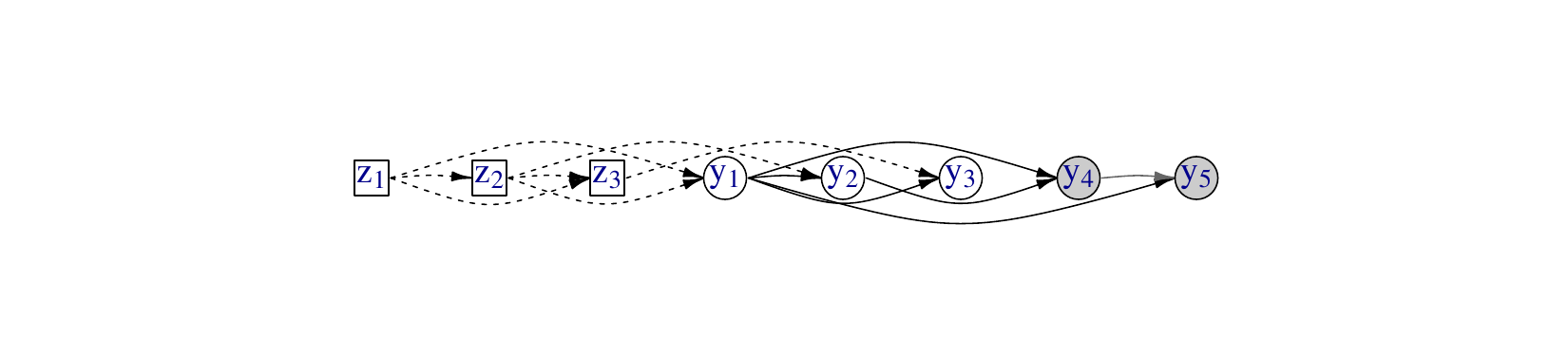} 
 \end{subfigure}
\hspace*{\fill}
\\
\hspace*{\fill}
 \begin{subfigure}{.13\textwidth}
  \small RF-stand
 \end{subfigure}
 \begin{subfigure}{.6\textwidth}
  \includegraphics[trim={62mm 15mm 62mm 15mm},clip,width =1\linewidth]{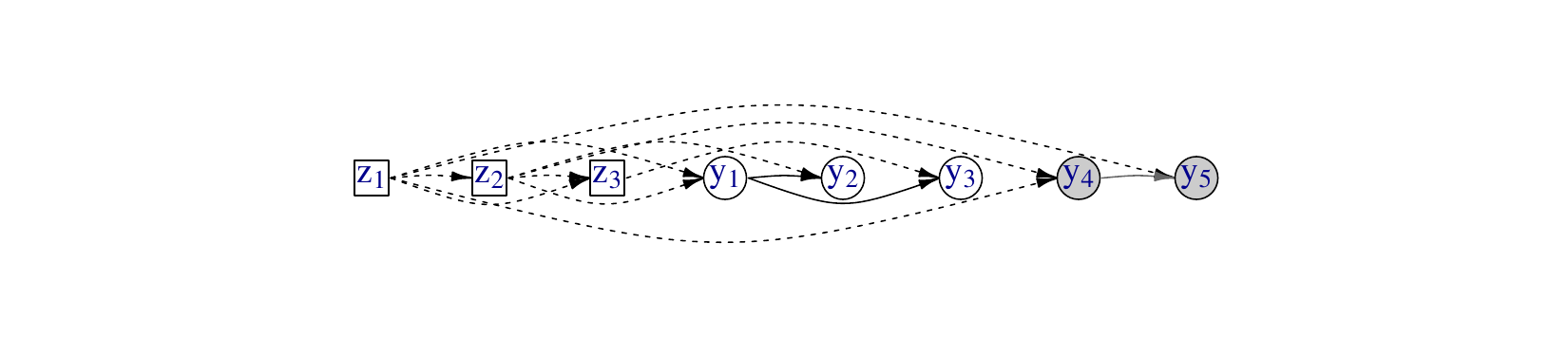} 
 \end{subfigure}
\hspace*{\fill}
\\
\hspace*{\fill}
 \begin{subfigure}{.13\textwidth}
  \small RF-ind
 \end{subfigure}
 \begin{subfigure}{.6\textwidth}
  \includegraphics[trim={62mm 15mm 62mm 15mm},clip,width =1\linewidth]{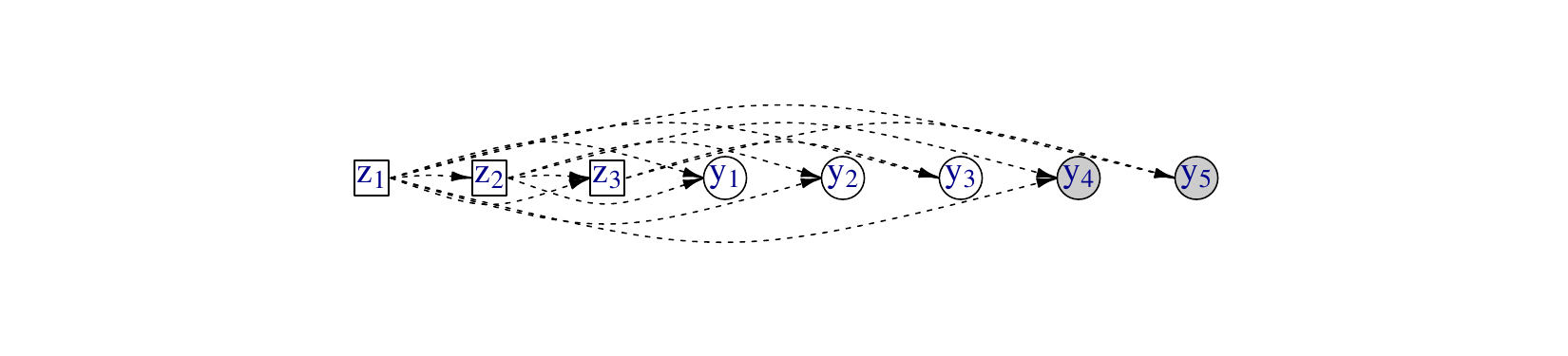} 
 \end{subfigure}
\hspace*{\fill}
\\
\hspace*{\fill}
 \begin{subfigure}{.13\textwidth}
  \small LF-full
 \end{subfigure}
 \begin{subfigure}{.6\textwidth}
  \includegraphics[trim={62mm 15mm 62mm 15mm},clip,width =1\linewidth]{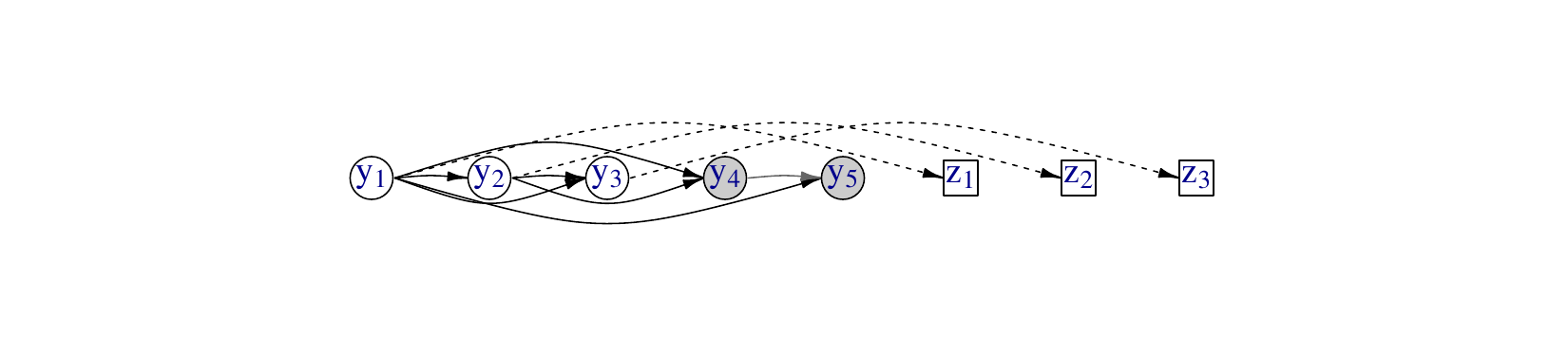} 
 \end{subfigure}
\hspace*{\fill}
\\
\hspace*{\fill}
 \begin{subfigure}{.13\textwidth}
  \small LF-ind
 \end{subfigure}
 \begin{subfigure}{.6\textwidth}
  \includegraphics[trim={62mm 15mm 62mm 15mm},clip,width =1\linewidth]{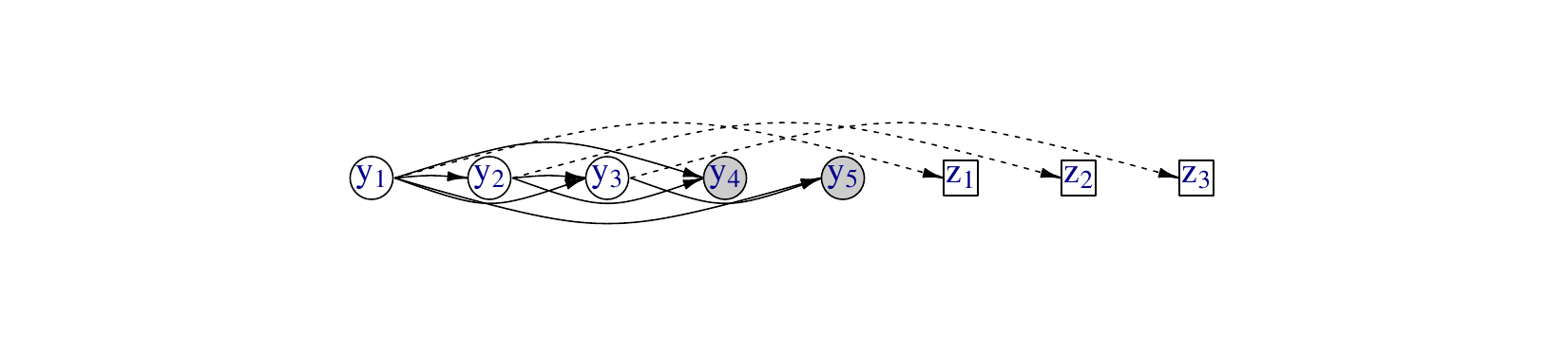} 
 \end{subfigure}
\hspace*{\fill}
\\
\hspace*{\fill}
 \begin{subfigure}{.13\textwidth}
  \small LF-auto
 \end{subfigure}
 \begin{subfigure}{.6\textwidth}
  \includegraphics[trim={62mm 15mm 62mm 15mm},clip,width =1\linewidth]{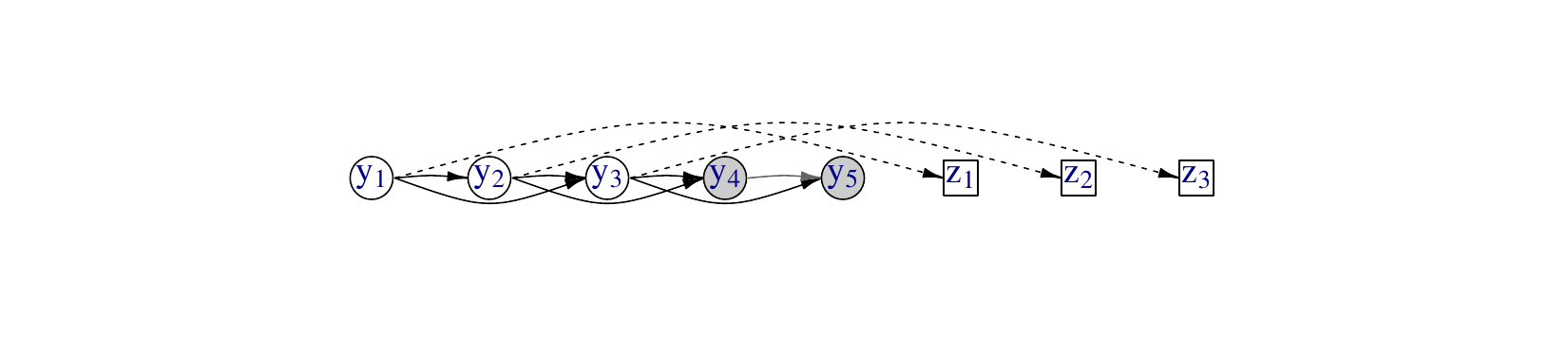} 
 \end{subfigure}
\hspace*{\fill}
 \caption{Illustration of the methods in Table \ref{tab:methods} as directed acyclic graphs (DAGs) for a toy example with $n=5$, $n_O=3$, $n_P=2$, and conditioning sets of size $m=2$. Variable ordering is from left to right, and conditioning indicated by arrows. Response variables $z_i$ with $i \in o$ are in squares, with dashed arrows to and from response variables. Prediction variables $y_i$ with $i \in p$ are in grey, as are arrows between them.}
 \label{fig:toyillustration}
 \end{figure}

\subsection{Response-first ordering\label{sec:responsefirstordering}}

Response-first ordering means that $\bx$ is ordered as $\bx = (\bx_r,\bx_\ell) = (\bz_o, \by)$, allowing us to rewrite \eqref{eq:gvp} as
\begin{equation}
    \label{eq:rf}
   \textstyle \adens(\bx)  = \left(\prod_{i=1}^{n} \dens(y_i | \by_{q_y(i)}, \bz_{q_z(i)} )  \right) \left(\prod_{i \in o} \dens(z_i | \bz_{g(i)} )\right),
\end{equation}
where $q_y(i)$ and $q_z(i)$ are index sets implied by $g(j)$ for $y_i = \bx_j$; for example, if $k \in g(j)$ and $\bx_k = z_l$, then $l \in q_z(i)$.
Under response-first ordering, $\bV = \bU_{\ell\ell}$ is simply a submatrix of $\bU$.
To see this, note that $\bU$ can be written in block form as
\begin{equation}
\label{eq:Urf}
\bU = \begin{bmatrix}
\bU_{rr} & \bU_{r\ell} \\
\bU_{\ell r} & \bU_{\ell\ell}
\end{bmatrix}
= \begin{bmatrix}
\bU_{rr} & \bU_{r\ell} \\
\bfzero & \bU_{\ell\ell}
\end{bmatrix},
\end{equation}
where $\bU$ and hence $\bU_{\ell\ell}$ are upper-triangular. Therefore, $\bW = \bU_{\ell,\all}\bU_{\ell,\all}' = \bU_{\ell\ell}\bU_{\ell\ell}'$, and so $\bV = \rchol(\bW) = \bU_{\ell\ell}$.
Thus, after constructing $\bU$, no additional computation is required to obtain $\bV$, so predictions can be computed in linear time. It is important to use this result to fill the entries of $\bV$ directly, rather than forming $\bW$ and factoring it. The latter approach can lead to a large number of numerical nonzeros in $\bV$, which are symbolic nonzero entries in $\bV$ that are zero in theory but nonzero in practice due to numerical errors.
These numerical nonzeros are illustrated in Figure \ref{fig:V2D}, and described in detail in Appendix \ref{app:nonzeros}.

For the following three methods, we consider response-first ordering under OP restriction (see beginning of Section \ref{sec:methods}), meaning that $\by = (\by_o,\by_p)$, and so $\bx = (\bz_o,\by_o,\by_p)$.

\paragraph{Response-first ordering, full conditioning (RF-full)} 

This scheme is labeled as ``full'' because we allow every variable to condition on any variables ordered previously in $\bx$. 
In \eqref{eq:rf}, the conditioning vectors $\by_{q_y(i)}$ and $\bz_{q_z(i)}$ are chosen as the $m$ variables closest in space to $y_i$, among those that are previously ordered in $\bx$, conditioning on the latent $y_j$ instead of the response $z_j$ whenever possible. Specifically, we set $q(i)$ to consist of the indices corresponding to the $m$ nearest locations to $\bs_i$, including $i$ for $i \in o$, and not including $i$ for $i \in p$. Then, for any $j \in q(i)$, we let $y_i$ condition on $y_j$ if it is ordered previously in $\bx$, and condition on $z_j$ otherwise. More precisely, we set $q_y(i) = \{j \in q(i): j<i\}$ and $q_z(i) = \{j \in q(i): j\geq i\}$.

\paragraph{Response-first ordering, standard conditioning (RF-stand)}

This scheme is identical to RF-full except that $\by_p$ conditions only on $\by_p$ and $\bz_o$, not on $\by_o$. More precisely, we use the same $q(i)$ as in RF-full, but then set $q_y(i)= \{j \in q(i): j<i; \, j \in p\}$ and $q_z(i) = q(i) \setminus q_y(i)$. This approach is labeled as ``standard'' because the posterior mean and conditional simulations of $\by_p$ from this model are equivalent to those obtained in \cite{Guinness2016a}, which used the standard Vecchia approximation. 

RF-stand is computationally useful if only predictions of $\by_p$ (not of $\by_o$) are desired, because $\by_o$ can be removed from $\bx$ without changing the predictions of $\by_p$.
Specifically, we then have $\bU_{\ell_o \ell_p}=\bfzero$ from \eqref{eq:U}. It can be shown that $\bW = \blockdiag(\bW_{oo},\bW_{pp})$ and $\bV = \blockdiag(\bV_{oo},\bV_{pp})$ are both block-diagonal with $\bV_{pp} = \bU_{\ell_p \ell_p}$, and $\bfmu_p = (\bU_{\ell_p \ell_p}')^{-1}(\bU_{\ell_o \ell_p})'\bz_o$. 
Thus, when only prediction at unobserved locations $\locs_p$ is desired, the prediction tasks laid out in Section \ref{sec:predictions} can be carried out solely based on $\bU_{\all \ell_p}$, which is the submatrix formed by the last $n_P$ columns of $\bU$ corresponding to $\by_p$. 
That is, the first $2n_O$ columns of $\bU$ corresponding to $\by_o$ and $\bz_o$ would then not be required for prediction, resulting in a prediction complexity that depends on $n_P$, not on $n=n_O+n_P$ (once the $q(i)$ have been determined).
This computational simplification comes at the price of some loss of accuracy relative to RF-full because of the restriction of the conditioning sets (see Proposition \ref{prop:KLordering} and Section \ref{sec:addcomp}).

\paragraph{Response-first ordering, independent conditioning (RF-ind)}

RF-ind also uses the ordering $(\bz_o,\by_o,\by_p)$, but we enforce that latent variables condition only on $\bz_o$, and so $q_y(i) = \emptyset$ for all $i=1,\ldots,n$ in \eqref{eq:rf}. Then, $q_z(i)$ consists of the indices corresponding to the $m$ nearest observed locations to $\bs_i$ among $\locs_o$, including $\bs_i$ if $i\in o$. RF-ind ignores any posterior dependence between entries of $\by$. More precisely, note that, from \eqref{eq:U}, $\bU_{\ell\ell}$ and hence $\bW$ are now diagonal, and so:
\begin{align*}
\bfSigma & = \bW^{-1} = \diag(\{\bU_{\ell_i \ell_i}^2: i=1,\ldots,n \})^{-1} 
 = \diag\big(\{\var(\by_{i}|\bz_{q_z(i)}): i=1,\ldots,n \}\big).
\end{align*}
Similarly, it is straightforward to show that $\bfmu_{i} = \E(y_{i} | \bz_{q_z(i)})$. RF-ind is equivalent to local kriging, in which each marginal predictive distribution is obtained by considering the conditional distribution of $y_i$ given the neighboring observations from $\bz_o$. This is implicitly the same predictions as in the NNGP-response model in \cite{Finley2017}, except that we here predict $\by_p$ instead of predicting $\bz_p$ as in the NNGPR. The latter can be easily achieved by adding the noise or nugget variance to each prediction variance (see end of Section \ref{sec:GP}). RF-ind can in principle be extremely fast in parallel computing environments because each conditional mean and variance can be calculated completely in parallel.

\subsection{Latent-first ordering\label{sec:latentfirstordering}}

Latent-first ordering means that $\bx$ is ordered as $\bx = (\by,\bz_o)$, resulting in the approximation
\begin{equation}
    \label{eq:lf}
\textstyle \adens(\bx)  = \left(\prod_{i=1}^{n} \dens(y_i | \by_{q_y(i)} )  \right) \left(\prod_{i \in o} \dens(z_i | y_i )\right),
\end{equation}
where $z_i$ conditions only on $y_i$ because of conditional independence from all other variables in the model.

Under latent-first ordering, $\bV$ cannot be obtained directly as a submatrix of $\bU$ as in response-first ordering. However, under the OP restriction that $\by$ is ordered as $\by = (\by_o, \by_p)$, $\bU$ does contain some exploitable structure. Note that under this ordering, we have $\ell_o=o$, $\ell_p=p$, and $\bU_{pr}=\bfzero$. Hence,
\begin{equation}
\label{eq:Vlf}
\bU = \begin{bmatrix}
\bU_{oo} & \bU_{op} & \bU_{or} \\
\bm{0}   & \bU_{pp} & \bm{0} \\
\bm{0}   & \bm{0}   & \bU_{rr}
\end{bmatrix}, \quad
\bW = \begin{bmatrix}
\bW_{oo}  & \bU_{op}\bU_{pp}' \\
\bU_{pp}\bU_{op}' & \bU_{pp}\bU_{pp}'
\end{bmatrix}, \quad
\bV = \begin{bmatrix}
\bV_{oo}  & \bU_{op} \\
\bfzero & \bU_{pp}
\end{bmatrix},
\end{equation}
where only the entries of $\bV_{oo}=\rchol(\bW_{oo})$ must be computed from  $\bW_{oo}=\bU_{oo}\bU_{oo}'+\bU_{or}\bU_{or}'$, and the last $n_P$ columns of $\bV$ corresponding to $\by_p$ can simply be ``copied'' from $\bU$.
This result can make latent-first orderings computationally competitive when there are many more prediction locations than observation locations, for example when predictions are required on a fine grid from a small number of observations.

\paragraph{Latent-first ordering, full conditioning (LF-full)}

We consider latent-first ordering under OP restriction: $\bx = (\by_o,\by_p,\bz_o)$. As in RF-full, LF-full scheme is labeled as ``full'' because we allow every variable to condition on any variables ordered previously in $\bx$. Specifically, in \eqref{eq:lf}, the latent conditioning vector $\by_{q_y(i)}$ simply consists of the $m$ latent variables $y_j$ with $j<i$ whose locations are closest in space to $\bs_i$. 

We have found that LF-full is usually the most accurate scheme in the examples we have tried; however, it can also be the most computationally demanding. As shown in \eqref{eq:Vlf}, only parts of $\bV$ can be recovered directly from $\bU$, and  linear computational complexity cannot be guaranteed in general; see Section \ref{sec:complexity} for more details. 

LF-full can be viewed as a special version of the NNGP in \citet{Datta2016}, in which their reference set is chosen as $\locs$, the vector of observed and prediction locations.

\paragraph{Latent-first ordering, independent conditioning (LF-ind)}

This scheme is a special case of LF-full with $q_y(i) \subset o$, and hence there is no conditioning on variables at prediction locations: $q_y(i) \cap p = \emptyset$. This assumption of conditional independence of the entries of $\by_p$ given $\by_o$ can lead to inaccurate approximations of the joint predictive distribution at a set of locations (see, e.g., top row in Figure \ref{fig:KL2D}).
As with LF-full, linear computational complexity cannot be guaranteed because the submatrix of $\bV$ corresponding to $\by_o$ cannot be obtained directly from $\bU$. LF-ind is implicitly the same approximation as used in the NNGP-collapsed model in \cite{Finley2017}.

\paragraph{Multi-resolution approximation (MRA)}

Predictions using the MRA \citep{Katzfuss2015,Katzfuss2017b} can be viewed as a version of LF-full, except that the $q_y(i)$ are based on iterative domain partitioning \citep[cf.][Sect.~3.7]{Katzfuss2017a}. In contrast to the nearest-neighbor conditioning in LF-full, MRA conditioning ensures sparsity and linear complexity, which can be shown using \citet[][Prop.~6]{Katzfuss2017a}. While RF-full can be more accurate than MRA for a given $m$ (Section \ref{sec:addcomp}), the special conditioning structure of the MRA has many other benefits \citep{Jurek2018}. For example, $\bV^{-1}$ has the same sparsity as $\bV$ for the MRA, and so $\bV^{-1}\bH'$ is generally sparse if $\bH$ is sparse. This allows computing the posterior covariance matrix of a large number of linear combinations $\bH\by$.

\paragraph{Latent-first and coordinate ordering, autoregressive conditioning (LF-auto)}

Finally, we consider a method that is based on a general ordering of the locations, without OP restriction. We consider latent-first ordering, $\bx=(\by,\bz_o)$, and the approximation in \eqref{eq:lf}, where each $y_i$ simply conditions on $(y_{i-m},\ldots,y_{i-1})$; more precisely, we set $q_y(i) = \{\max(1,i-m),\ldots,i-1\}$. This amounts to a latent autoregressive process of order $m$.
It is easy to verify that $\bW$ is banded with bandwidth $m$, and so its Cholesky factor $\bV$ can be obtained in $\order(nm^2)$ time.

LF-auto is most appropriate when successive locations in $\locs$ (and hence variables in $\bx$) tend to be close in space, so that $\by_{q_y(i)}$ has strong correlation with $y_i$. This is often the case with coordinate-based ordering, especially in one-dimensional space. Therefore, we only consider LF-auto based on left-to-right ordering in one-dimensional domains ---  see Figure \ref{fig:1Dillus} for an illustration.
GPs are often used to model functions in one dimension, for example in astronomy \citep[e.g.,][]{Wang2012,Kelly2014,Foreman-Mackey2017}.

\begin{figure}
	\begin{subfigure}{1\textwidth}
	\centering
	\centering
	\includegraphics[width =.67\linewidth]{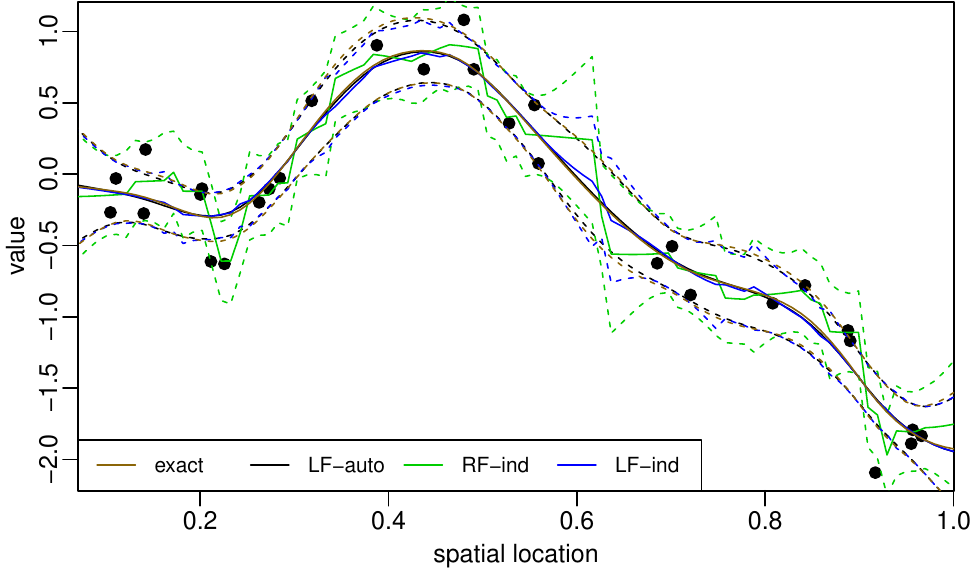}
  \caption{Simulated data (dots), posterior means (solid lines) and pointwise 95\% intervals (dashed lines)}
\label{fig:predillus}
	\end{subfigure} \\ \vspace{4mm}

\begin{subfigure}{.2\textwidth}
	\centering
	\includegraphics[width =.98\linewidth]{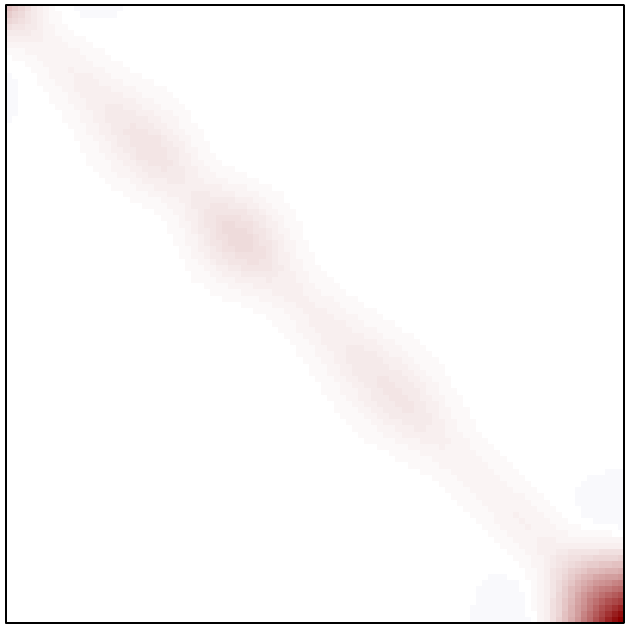}
	\caption{exact}
	\label{fig:cov1Dexact}
	\end{subfigure}%
\hfill
	\begin{subfigure}{.2\textwidth}
	\centering
	\includegraphics[width =.98\linewidth]{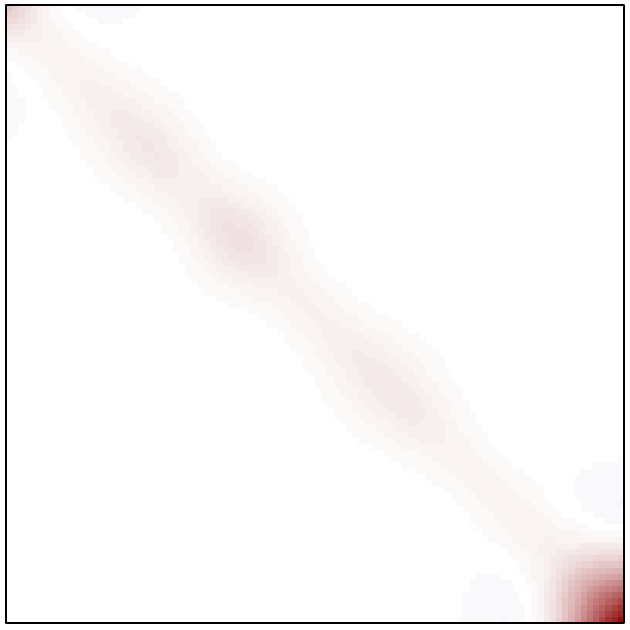}
	\caption{LF-auto}
	\end{subfigure}%
\hfill
	\begin{subfigure}{.2\textwidth}
	\centering
	\includegraphics[width =.98\linewidth]{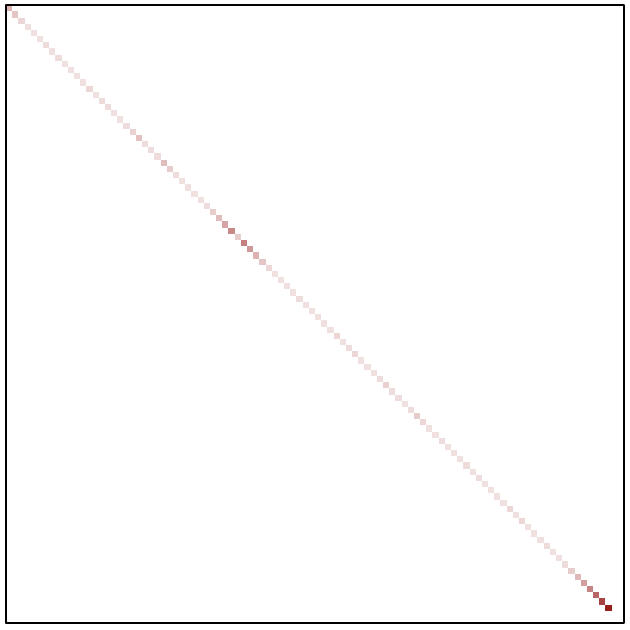}
	\caption{RF-ind}
	\end{subfigure}%
\hfill
	\begin{subfigure}{.2\textwidth}
	\centering
	\includegraphics[width =.98\linewidth]{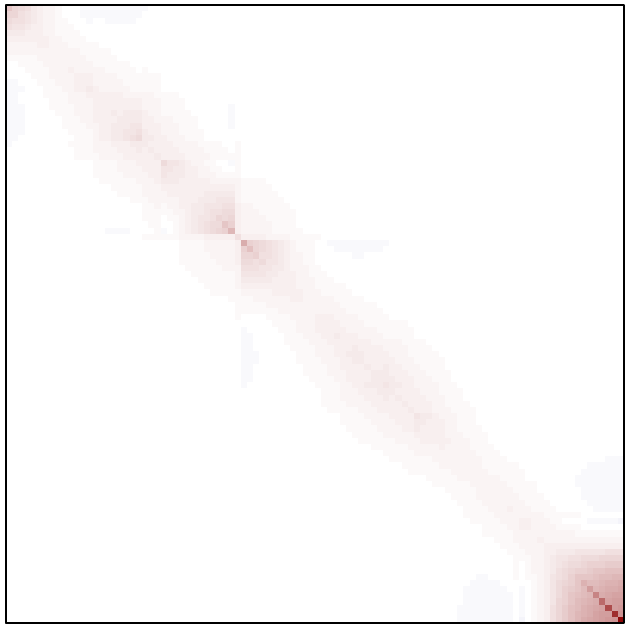}
	\caption{LF-ind}
	\label{fig:cov1DLF-ind}
	\end{subfigure}%
\hfill
	\begin{subfigure}{.081\textwidth}
	\centering
	\includegraphics[width =.8\linewidth]{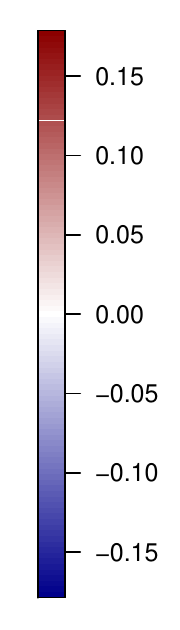}
	\end{subfigure}%
  \caption{Illustration of predictions at $n_P=100$ locations based on $n_O=30$ simulated data on $\domain = [0,1]$ for a GP with Mat\'ern covariance with smoothness 1.5 using $m=2$ and coordinate (left-to-right) ordering. 
  (\subref{fig:predillus}): LF-auto lines are largely covered up by exact lines.
  (\subref{fig:cov1Dexact})--(\subref{fig:cov1DLF-ind}): Posterior predictive covariance matrices $\bfSigma_{pp}$}
\label{fig:1Dillus}
\end{figure}

\subsection{Summary and properties\label{sec:properties}}

We now discuss some of the properties of the methods summarized in Table \ref{tab:methods}.

\subsubsection{Zero noise}

In the case of zero noise or nugget (i.e., $\tau_i =0$ for all $i=1,\ldots,n$), we have $\by_o = \bz_o$, and so there is no real distinction between RF-full, RF-stand, and LF-full anymore. Similarly, LF-ind and RF-ind then become equivalent.

\subsubsection{Approximation accuracy\label{sec:approxacc}}

Each of our methods produces a prediction $\adens(\by|\bz_o)$, which can be viewed as an approximation of the exact GP prediction $\dens(\by|\bz_o)$, or as a valid and exact conditional distribution implied by the multivariate normal Vecchia density $\adens(\bx)$ in \eqref{eq:gvp}. We now discuss the Vecchia-approximation error from the former perspective, in terms of the (conditional) Kullback-Leibler (KL) divergence to the exact distribution $\dens$. The KL divergence is the expected difference in log-likelihood:
\[
\textstyle \KL\big(\dens(\bx)\|\adens(\bx)\big) = \E^\bx \log \dens(\bx) - \E^\bx \log\adens(\bx) = \int \dens(\bx) \log\big(\dens(\bx)/\adens(\bx)\big) d\bx.
\]
Similarly, for generic random vectors $\bz$ and $\by$, we define the conditional KL (CKL) divergence as
\[
\textstyle \CKL\big(\dens(\by|\bz)\|\adens(\by|\bz)\big) = \E^\bz \KL\big(\dens(\by|\bz)\|\adens(\by|\bz)\big) = \int \dens(\bz) \int \dens(\by|\bz) \log\big(\dens(\by|\bz)/\adens(\by|\bz)\big) d\by d\bz,
\]
which is the KL divergence for the conditional distribution of $\by$ given $\bz$, averaged over possible realizations of $\bz$.

\begin{prop}
\label{prop:KLvecchia}
Let $\bx=(\bx^{(1)},\bx^{(2)})$ be a multivariate normal random vector, with $\bx^{(1)} = (x_1,\ldots,x_{n_1})$ and $\bx^{(2)} = (x_{n_1+1},\ldots,x_{n_1+n_2})$. Further, let $\adens_1(\bx)$ and $\adens_2(\bx)$ be two Vecchia approximations of the form \eqref{eq:gvp}, with conditioning vectors $g_1(i)$ and $g_2(i)$, respectively.
\begin{enumerate}
    \item If $g_1(i) \subset g_2(i)$ for all $i=1,\ldots,n_1+n_2$, then $\KL\big(\dens(\bx)\|\adens_1(\bx)\big) \geq \KL\big(\dens(\bx)\|\adens_2(\bx)\big)$.
    \item If $g_1(i) \subset g_2(i)$ for all $i=n_1+1,\ldots,n_1+n_2$, then $\CKL\big(\dens(\bx^{(2)}|\bx^{(1)})\|\adens_1(\bx^{(2)}|\bx^{(1)})\big) \geq \CKL\big(\dens(\bx^{(2)}|\bx^{(1)})\|\adens_2(\bx^{(2)}|\bx^{(1)})\big)$.
\end{enumerate}
\end{prop}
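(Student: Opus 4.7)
The plan is to exploit the fact that both the true density and a Vecchia approximation factor as products over the variables, so that their log-ratio separates into a sum of one term per variable, each controlled by a conditional mutual information.

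For part 1, I would write
\[
\KL\bigl(\dens(\bx)\|\adens_k(\bx)\bigr) = \sum_{i=1}^{n_1+n_2} \E \log \frac{\dens(x_i \mid \bx_{1:i-1})}{\dens(x_i \mid \bx_{g_k(i)})},
\]
using the exact chain-rule factorization of $\dens$ together with the Vecchia factorization \eqref{eq:gvp}. Subtracting the $k=1$ and $k=2$ expressions, the numerators cancel and I am left with
\[
\KL(\dens\|\adens_1) - \KL(\dens\|\adens_2) = \sum_{i=1}^{n_1+n_2} \E \log \frac{\dens(x_i \mid \bx_{g_2(i)})}{\dens(x_i \mid \bx_{g_1(i)})}.
\]
Since $g_1(i)\subset g_2(i)$, I can identify each summand, after a standard manipulation, as the conditional mutual information $I\bigl(x_i;\bx_{g_2(i)\setminus g_1(i)} \mid \bx_{g_1(i)}\bigr)$, which is nonnegative. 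Summing gives the inequality. (Equivalently, each summand is the expected KL between the true conditional given the larger set and the one given the smaller set, which is a nonnegative quantity by Jensen's inequality.)

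For part 2, the crucial observation is that marginalizing $\adens_k(\bx)$ over $\bx^{(2)}$ yields a distribution that does \emph{not} depend on $k$. This uses the triangular structure $g_k(i)\subset\{1,\ldots,i-1\}$: integrating out $x_{n_1+n_2}, x_{n_1+n_2-1},\ldots,x_{n_1+1}$ in reverse order removes the factors $\dens(x_i \mid \bx_{g_k(i)})$ one at a time, since at each step $x_i$ appears only in its own factor. The surviving marginal is $\prod_{i=1}^{n_1}\dens(x_i\mid\bx_{g(i)})$, which is identical for $k=1,2$ because $g_1(i)=g_2(i)$ for $i\le n_1$ by hypothesis. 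Writing
\[
\CKL(\dens\|\adens_k) = \E\log \dens(\bx^{(2)}\mid\bx^{(1)}) - \E\log\adens_k(\bx) + \E\log\adens_k(\bx^{(1)}),
\]
the last term cancels when I take the difference across $k=1,2$, so the difference $\CKL(\dens\|\adens_1) - \CKL(\dens\|\adens_2)$ reduces to $\sum_{i=n_1+1}^{n_1+n_2} \E\log\{\dens(x_i\mid\bx_{g_2(i)})/\dens(x_i\mid\bx_{g_1(i)})\}$, which is nonnegative by the same conditional-mutual-information argument as in part 1.

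The one step I expect to require care is the reverse-integration argument showing that $\int\adens_k(\bx)\,d\bx^{(2)}$ is free of the $g_k(i)$ for $i>n_1$; everything else is bookkeeping with the chain rule and the nonnegativity of conditional mutual information. Nothing in the argument uses Gaussianity beyond the fact that the relevant conditional densities exist and the integrals are well-defined, although the statement assumes multivariate normality.
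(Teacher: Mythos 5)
Your argument is correct and reaches the same structural decomposition as the paper's, but by a genuinely different route. The paper exploits Gaussianity explicitly: it shows that the standardized Vecchia residuals are standard normal, derives the closed form $\KL\big(\dens(\bx)\|\adens(\bx)\big) = \frac{1}{2}\sum_i \log\big(\var(x_i|\bx_{g(i)})/\var(x_i|\bx_{1:i-1})\big)$, and then invokes the law of total variance to show that enlarging $g(i)$ can only shrink each conditional variance. Your version keeps the per-variable terms abstract as expected log-ratios of true conditionals and identifies each difference term as a conditional mutual information $I\big(x_i;\bx_{g_2(i)\setminus g_1(i)}\mid\bx_{g_1(i)}\big)\geq 0$; for Gaussians this is exactly the paper's log-variance ratio, and the nonnegativity of conditional mutual information plays the role of the law of total variance. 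What your route buys is that, as you note, Gaussianity enters only through the existence of the conditional densities, so the monotonicity statement holds for any Vecchia-type approximation built from true conditionals; what the paper's route buys is the explicit formula \eqref{eq:KLvecc}, which it reuses in the proof of Proposition \ref{prop:KLordering}. Your reverse-integration argument for Part 2 is also the same device the paper uses when it writes $\adens(\bx^{(2)}|\bx^{(1)}) = \adens(\bx)/\adens(\bx^{(1)}) = \prod_{i=n_1+1}^{n_1+n_2}\dens(x_i|\bx_{g(i)})$. One small correction: in Part 2 you assert that $g_1(i)=g_2(i)$ for $i\le n_1$ ``by hypothesis,'' but the hypothesis constrains only $i>n_1$. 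This assumption is not needed: the factors for $i\le n_1$ cancel \emph{within} each $k$ between $\E\log\adens_k(\bx)$ and $\E\log\adens_k(\bx^{(1)})$, so $\CKL\big(\dens(\bx^{(2)}|\bx^{(1)})\|\adens_k(\bx^{(2)}|\bx^{(1)})\big)$ depends on $g_k(i)$ only for $i>n_1$, and your conclusion stands regardless of how the two approximations condition on the first block.
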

Using these properties, the following can be said about the approximation accuracy of the general Vecchia prediction methods in Table \ref{tab:methods}:
\begin{prop}
\label{prop:KLordering}
Let $\KL_m^{\text{A}}(\bx) = \KL(\dens(\bx)\|\adens(\bx))$, where $\adens(\bx)$ is the approximate density obtained using Vecchia approach A (e.g., A=RF-full) with conditioning vectors of size $m$, and similarly for $\CKL_m^{\text{A}}$.
\begin{enumerate}
    \item $\KL^{\text{A}}_{n-1}(\bx)=\CKL^{\text{A}}_{n-1}(\by|\bz_o)=\CKL^{\text{A}}_{n-1}(\by_p|\bz_o) = 0$ for $\text{A} \in \{\text{RF-full,RF-stand,LF-full,LF-auto}\}$
    \item $\KL^{\text{A}}_{m+1}(\bx)\leq\KL^{\text{A}}_{m}(\bx)$, for all A in Table \ref{tab:methods}
    \item $\CKL^{\text{A}}_{m+1}(\by_p|\by_o,\bz_o)\leq\CKL^{\text{A}}_{m}(\by_p|\by_o,\bz_o)$ for $\text{A} \in \{\text{RF-full,RF-stand,RF-ind,LF-full,LF-ind}\}$
    \item $\CKL^{\text{A}}_{m+1}(\by|\bz_o)\leq\CKL^{\text{A}}_{m}(\by|\bz_o)$ for $\text{A} \in \{\text{RF-full,RF-stand,RF-ind}\}$
    \item $\CKL^{\text{A}}_{m+1}(\by_p|\bz_o)\leq\CKL^{\text{A}}_{m}(\by_p|\bz_o)$ for $\text{A} \in \{\text{RF-stand,RF-ind}\}$
    \item $\KL^{\text{RF-full}}_{m}(\bx) \leq \KL^{\text{RF-stand}}_{m}(\bx)$ and $\CKL^{\text{RF-full}}_{m}(\by|\bz_o) \leq \CKL^{\text{RF-stand}}_{m}(\by|\bz_o)$
    \item For a GP with exponential covariance function in one dimension: $\KL^{\text{LF-auto}}_{1}(\bx)=
    0$
\end{enumerate}
\end{prop}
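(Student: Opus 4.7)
The plan is to show that under LF-auto with $m=1$, the approximation $\adens(\bx)$ coincides exactly with the true density $\dens(\bx)$, which yields $\KL=0$ immediately. First I would unpack the definition: with left-to-right coordinate ordering $s_1<s_2<\cdots<s_n$, latent-first arrangement $\bx=(\by,\bz_o)$, and $q_y(i)=\{\max(1,i-1),\ldots,i-1\}$, equation \eqref{eq:lf} becomes
\[
\adens(\bx) = \dens(y_1)\prod_{i=2}^{n}\dens(y_i\mid y_{i-1})\prod_{i\in o}\dens(z_i\mid y_i).
\]
Separately, factoring the exact joint density by the chain rule on $\by$ and using the noise model $z_i\mid\by\sim\normal(y_i,\tau_i^2)$ independently across $i$ gives
\[
\dens(\bx) \;=\; \dens(\by)\prod_{i\in o}\dens(z_i\mid y_i) \;=\; \dens(y_1)\prod_{i=2}^{n}\dens(y_i\mid y_1,\ldots,y_{i-1})\prod_{i\in o}\dens(z_i\mid y_i).
\]

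The crux of the proof is therefore the identity $\dens(y_i\mid y_1,\ldots,y_{i-1})=\dens(y_i\mid y_{i-1})$, i.e., the first-order Markov property of the latent GP along the coordinate ordering. I would establish this by the classical screening argument for the Ornstein--Uhlenbeck process. Set $\rho_i\colonequals\exp(-(s_i-s_{i-1})/\lambda)$ so that $K(s_i,s_{i-1})=\sigma^2\rho_i$. For any $j<i$ we have $|s_i-s_j|=(s_i-s_{i-1})+(s_{i-1}-s_j)$ because the locations are monotonically ordered, which gives the multiplicative identity $K(s_i,s_j)=\rho_i K(s_{i-1},s_j)$. Hence
\[
\cov(y_i-\rho_i y_{i-1},\,y_j) \;=\; K(s_i,s_j)-\rho_i K(s_{i-1},s_j) \;=\; 0 \qquad \text{for all } j\leq i-1.
\]
Since all variables are jointly Gaussian, zero covariance implies independence, so $y_i-\rho_i y_{i-1}$ is independent of $(y_1,\ldots,y_{i-1})$. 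This gives $y_i = \rho_i y_{i-1}+\eta_i$ with $\eta_i\sim\normal(0,\sigma^2(1-\rho_i^2))$ independent of the past, which is precisely the claimed Markov property. Substituting back shows that the two factorizations agree term by term, so $\adens(\bx)=\dens(\bx)$ and $\KL^{\text{LF-auto}}_{1}(\bx)=0$.

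The only place that requires care is the screening identity $K(s_i,s_j)=\rho_i K(s_{i-1},s_j)$; this is really the heart of the statement and relies squarely on two ingredients, namely the multiplicative form $\exp(-|s-s'|/\lambda)$ of the exponential kernel and the monotone coordinate ordering that allows us to split $|s_i-s_j|$ additively. Neither hypothesis is available for general orderings or general covariance functions, which is why the conclusion is restricted to the one-dimensional exponential case. Everything else is bookkeeping: conditional independence of noise and the chain-rule factorization of a Gaussian density.
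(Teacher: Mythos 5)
Your argument for Part~7 is correct and in fact more explicit than the paper's own treatment: the paper simply invokes the Markov property of the exponential GP in one dimension, whereas you derive it from the screening identity $K(s_i,s_j)=\rho_i K(s_{i-1},s_j)$ for the multiplicative kernel under monotone ordering. That level of detail is a welcome addition, and your identification of the two hypotheses (multiplicative kernel, monotone ordering) that make the argument work is exactly right.

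However, the proposal proves only one of the seven claims in the proposition. Parts~1 through~6 are not addressed at all, and they constitute the bulk of the result. The paper's proof handles them as follows: it first observes that every variable in the model is conditionally independent of $z_j$ given $y_j$, so conditioning on $y_j$ is equivalent to conditioning on $(y_j,z_j)$; Part~1 then follows because for the listed methods the conditioning sets with $m=n-1$ are equivalent to the full history $(1,\ldots,i-1)$, which makes the KL divergence zero by the explicit formula \eqref{eq:KLvecc}. Parts~2--6 are all reduced to Proposition~\ref{prop:KLvecchia} (monotonicity of KL and conditional KL under nested conditioning sets) by verifying, method by method, that the relevant conditioning vectors are nested --- e.g., $g_m(i)\subset g_{m+1}(i)$ for the indices corresponding to $\by_p$, or $g^{\text{RF-stand}}(i)\subset g^{\text{RF-full}}(i)$ for Part~6, with some care needed for Part~5 where one must first argue that $\adens(\by_p|\bz_o)$ under RF-stand and RF-ind coincides with the approximation obtained after deleting $\by_o$ from $\bx$ entirely. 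None of this bookkeeping is trivial enough to omit: the restriction of Parts~4 and~5 to particular methods reflects genuine failures of the nesting condition for the other methods, so a complete proof must check these inclusions explicitly. You need to supply these arguments before the proposal can be considered a proof of the proposition as stated.
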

To summarize, the Vecchia approximations tend to become more accurate as the conditioning-set size $m$ increases. For all approaches in Table \ref{tab:methods} this increase in accuracy is in terms of the KL divergence for $\dens(\bx)$, but for certain methods the same is also guaranteed for distributions such as $\dens(\by|\bz_o)$ or $\dens(\by_p|\bz_o)$ that are more relevant for prediction. 
Indeed, we have $\adens(\by|\bz_o) = \dens(\by|\bz_o)$ in the limit of $m = n-1$ for most methods. However, for RF-ind and LF-ind, due to the assumption of conditional independence of the entries in $\by_p$ given $\bz_o$ and $\by_o$, this holds only marginally, in the sense that $\adens(\by_i|\bz_o) = \dens(\by_i|\bz_o)$ but $\adens(\by|\bz_o) \neq \dens(\by|\bz_o)$. This is also shown numerically in Figures \ref{fig:KL1D} and \ref{fig:KL2D}, top row.
RF-full can be expected to result in more accurate approximations of $\dens(\by|\bz_o)$ than RF-stand.
LF-auto is exact with $m\geq 1$ for a GP with exponential covariance function in one dimension. More generally, for a GP with Mat\'ern covariance with smoothness $\nu$, we obtain nearly exact representations if $m>\nu$ \citep[cf.][Fig.~2a]{Katzfuss2017a}.

\subsubsection{Computational complexity \label{sec:complexity}}

As laid out in Section \ref{sec:predictions}, the relevant quantities for prediction can be obtained by computing $\bU$, calculating $\bV$ from $\bU$, carrying out a selected inversion based on $\bV$, and performing triangular solves in $\bV$.
For all methods, the matrix $\bU$ has only $m$ off-diagonal nonzero elements per column by construction, and it can be computed in $\order(nm^3)$ time using \eqref{eq:U}. The cost for each triangular solve in $\bV$ is on the order of the number of nonzero entries in $\bV$. The cost of computing $\bV$ and the selected inversion is proportional to the sum of squares of the number of nonzero entries per column in $\bV$.

For all response-first methods (i.e., RF-full, RF-stand, and RF-ind), we have shown below \eqref{eq:Urf} that $\bV$ is simply a submatrix of $\bU$. For LF-auto and MRA, $\bV =\rchol(\bW)$ must be explicitly computed, but these methods' special conditioning structures ensure that there is no fill-in. Hence, for these methods the number of off-diagonal nonzero entries in each column of $\bV$ is guaranteed to be at most $m$. This implies that $\bV$ and the selected inverse can hence be computed in $\order(n m^2)$ time, $\bV^{-1}\bH'$ in $\order(n m \ell)$ time, and $(\bV^{-1}\bH')'(\bV^{-1}\bH')$ can be computed in $\order(n\ell^2)$ time. 
(An additional approximation can be necessary for selected inversion in the case of OP methods --- see Appendix \ref{app:selinv} for details.)
Thus, the complexity of GP prediction using LF-auto, MRA, and all response-first methods is linear in $n$.

In contrast, LF-full and LF-ind result in more nonzero entries in $\bW$, and hence more potential for fill-in in $\bV$ (see Figures \ref{fig:Vsparsity} and \ref{fig:V2D} for illustration). As a result, linear complexity is not guaranteed for these two methods, even when using the block form in \eqref{eq:Vlf}. For example, for locations on a regular two-dimensional grid ordered according to their coordinates, \citet[][Prop.~5]{Katzfuss2017a} proved that the time required for computing $\bV_{oo} = \rchol(\bW_{oo})$ using reverse ordering grows quadratically with $n_O$. As shown in Figure \ref{fig:timing}, while fill-reducing permutations can reduce computation times somewhat, inference might still be slow or fail due to memory issues.

For the latent NNGP model underlying LF-ind and LF-full, \citet{Datta2016} proposed a sequential Gibbs sampler that samples every element of $\by$ from its full-conditional distribution. As this approach can lead to non-convergence issues, \citet{Finley2017} instead proposed to sample $\by_o$ jointly, and then to sample $\by_p$ conditional on $\by_o$. This avoids numerical nonzero entries similarly to our block-form $\bV$ in \eqref{eq:Vlf}, but it still requires the expensive factorization of $\bW$, and it can result in additional sampling error.

\begin{figure}
	\centering
	\begin{subfigure}{.48\textwidth}
	\centering
	\includegraphics[width =.97\linewidth]{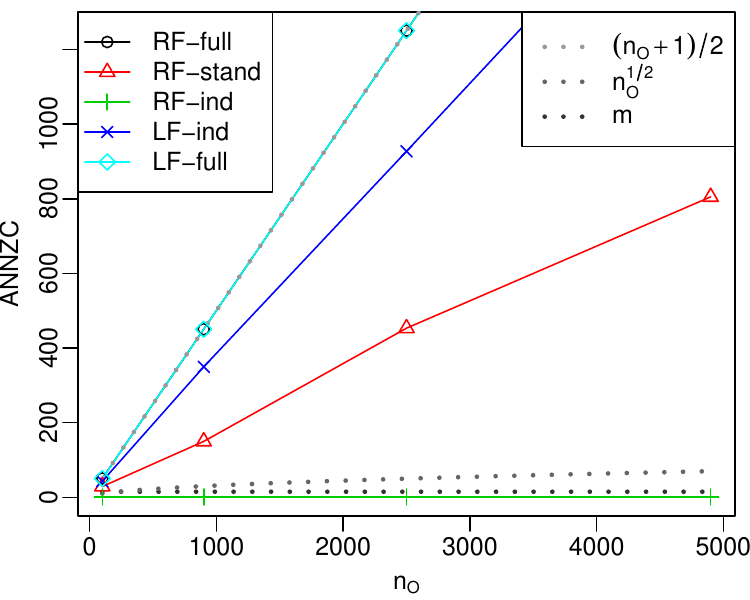}
	\caption{standard Cholesky}
	\end{subfigure}%
\hfill
	\begin{subfigure}{.48\textwidth}
	\centering
	\includegraphics[width =.97\linewidth]{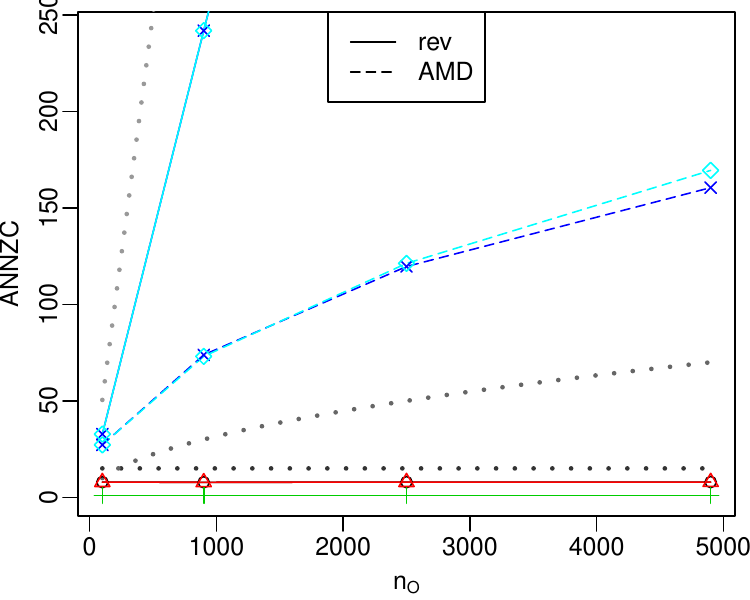}
	\caption{block}
	\end{subfigure}
  \caption{Average number of nonzero entries per column (ANNZC) in $\bV = \rchol(\bW)$ with $m=15$ neighbors for increasing $n_O=n_P$ on a unit square using maxmin ordering. 
  (a): Obtained using ``brute-force'' Cholesky based on reverse row-column ordering of $\bW$, resulting in a dense upper triangle of $\bV$ with ANNZC $=(n_O+1)/2$ for RF-full and LF-full. 
  (b): Obtained using \eqref{eq:Urf} or \eqref{eq:Vlf}, with ANNZC $<m$ for the RF methods, and some reduction of nonzero entries using approximate-minimum-degree (AMD) ordering of $\bW_{oo}$ for the LF methods}
\label{fig:Vsparsity}
\end{figure}

\subsubsection{Consistent framework\label{sec:consistent}}

Under OP ordering, the likelihood $\adens(\bz_o) = \int \adens(\bx) d\by$ is unchanged when $\by_p$ is removed from $\bx$, and the Vecchia approximation is applied to the resulting vector $\bx_{\tilde o} = \bx \setminus \by_p$:
\begin{equation}
\label{eq:impliedlik}
\textstyle \adens(\bz_o) = \int \int \adens(\bx_{\tilde o}) \adens(\by_p|\bx_{\tilde o}) d\by_p \, d\by_o = \int \adens(\bx_{\tilde o}) \int \adens(\by_p|\bx_{\tilde o}) d\by_p \, d\by_o = \int \adens(\bx_{\tilde o}) d\by_o.
\end{equation}
Thus, likelihood inference can first be carried out based on $\bx_{\tilde o}$ (i.e., only based on $\by_o$ and $\bz_o$) using \eqref{eq:likelihood} as described in \citet{Katzfuss2017a}. Then, $\by_p$ can be appended at the end of $\bx$ when predictions are desired, without changing the distribution $\adens(\bz_o)$. This has the advantage that parameter inference and prediction can be carried out in a consistent framework as described in Appendix \ref{sec:predictionsunknown}, and the prediction locations do not need to be known when training the model. If $\bU$ has already been calculated for $\bx_{\tilde o}$, it is also possible to reuse this matrix and simply append to it the columns corresponding to $\by_p$. Similarly, if prediction at additional locations is desired later, these can be ordered last and the existing matrices can be augmented, ensuring that the distribution of existing variables is unchanged.

For the response-first methods, the likelihood reduces to the standard \citet{Vecchia1988} likelihood. However, if only predictions are desired, we can set $g(i) = \emptyset$ for $i=1,\ldots,n_O$ without changing the approximation of $\dens(\by|\bz_o)$, resulting in computational savings. To see this, note that this choice only affects $\bU_{rr}$, which does not appear in any of the quantities in Section \ref{sec:predictions}, because $\bV = \bU_{\ell\ell}$ and $\bfmu=-(\bU_{r\ell}\bU_{\ell\ell}^{-1})'\bz_o$ for response-first.

Strictly speaking, LF-auto does not obey the OP restriction and hence has the undesirable property that changing $\locs$ (e.g., by adding prediction locations) might change the joint distribution of other variables in $\bx$. However, as shown in Figures \ref{fig:1Dillus} and \ref{fig:KL1D}, the LF-auto approximation is so accurate in one dimension that, even for small $m$, there is little difference to the exact GP, and so all joint distributions are almost identical to the exact ones.

\section{Simulation study \label{sec:comparison}}

We carried out a numerical comparison of the methods in Table \ref{tab:methods}. This systematic comparison was enabled by our R package \texttt{GPvecchia}, which implements all of the methods as special cases of the general Vecchia framework.

We simulated datasets at locations $\locs = \locs_o \cup \locs_p$, consisting of randomly drawn locations $\locs_o$ from an independent uniform distribution on $\domain$, combined with an equidistant grid $\locs_p$ on $\domain$. We simulated $\by$ at $\locs$ from the true distribution $\dens(\by)$ induced by a GP with Mat\'ern covariance function with variance 1 and smoothness parameter $\nu$, and then we sampled data $\bz_o$ by adding independent Gaussian noise with constant variance $\tau^2$ to $\by_o$. We call $1/\tau^2$ the signal-to-noise ratio (SNR). Effective range is the distance at which the correlation is 0.05.

Our simulation study focuses on the approximation accuracy of summaries of $\adens(\by|\bz_o)$, assuming that any potential hyperparameters $\bftheta$ are fixed and known. This results in more precise statements regarding the distinctions between the different methods, and avoids confounding with issues that are not the focus of our study, such as choice of inference algorithms, tuning parameters, or choice of prior distributions for $\bftheta$.

We computed the KL divergence for the joint distributions $\adens(\by_p|\bz_o)$ and $\adens(\by_o|\bz_o)$, and averaged over the results from multiple simulations for each method. This approximates the KL divergence for which the expectation is taken with respect to the joint distribution of the observations $\bz_o$ (see Section \ref{sec:approxacc}) and the observation locations $\locs_o$. We also computed the average marginal KL divergences for $\adens(\by_{i}|\bz_o)$ for $i \in p$ and for $i \in o$.

For ease of presentation, comparisons to the MRA and to an extension of sparse general Vecchia \citep{Katzfuss2017a} are omitted here and shown in Section \ref{sec:addcomp} instead.

\subsection{Numerical comparison in 1-D \label{sec:sim1D}}

First, we considered the unit interval, $\domain=[0,1]$, with $n_O=n_P=100$, effective range $0.15$, and 40 repetitions of the simulation. All methods used coordinate (left-to-right) ordering. To avoid numerical error when computing the KL divergence due to finite machine precision, we constrained the locations in $\locs_o$ to be at least $10^{-4}$ units apart. 
LF-auto is exact for $\nu=0.5$ with any $m \geq 1$. As shown in Figure \ref{fig:KL1D}, the method was much more accurate at the prediction locations than any of the other approaches for $\nu=1.5$. RF-ind and LF-ind, which do not condition on prediction locations, 
could only achieve a certain level of accuracy, with the joint KL divergence for the prediction locations leveling off as $m$ increased. The performance of RF-ind and LF-ind improved on the marginal measures. 

\begin{figure}[tp!]
\centering\includegraphics[width =.9\linewidth]{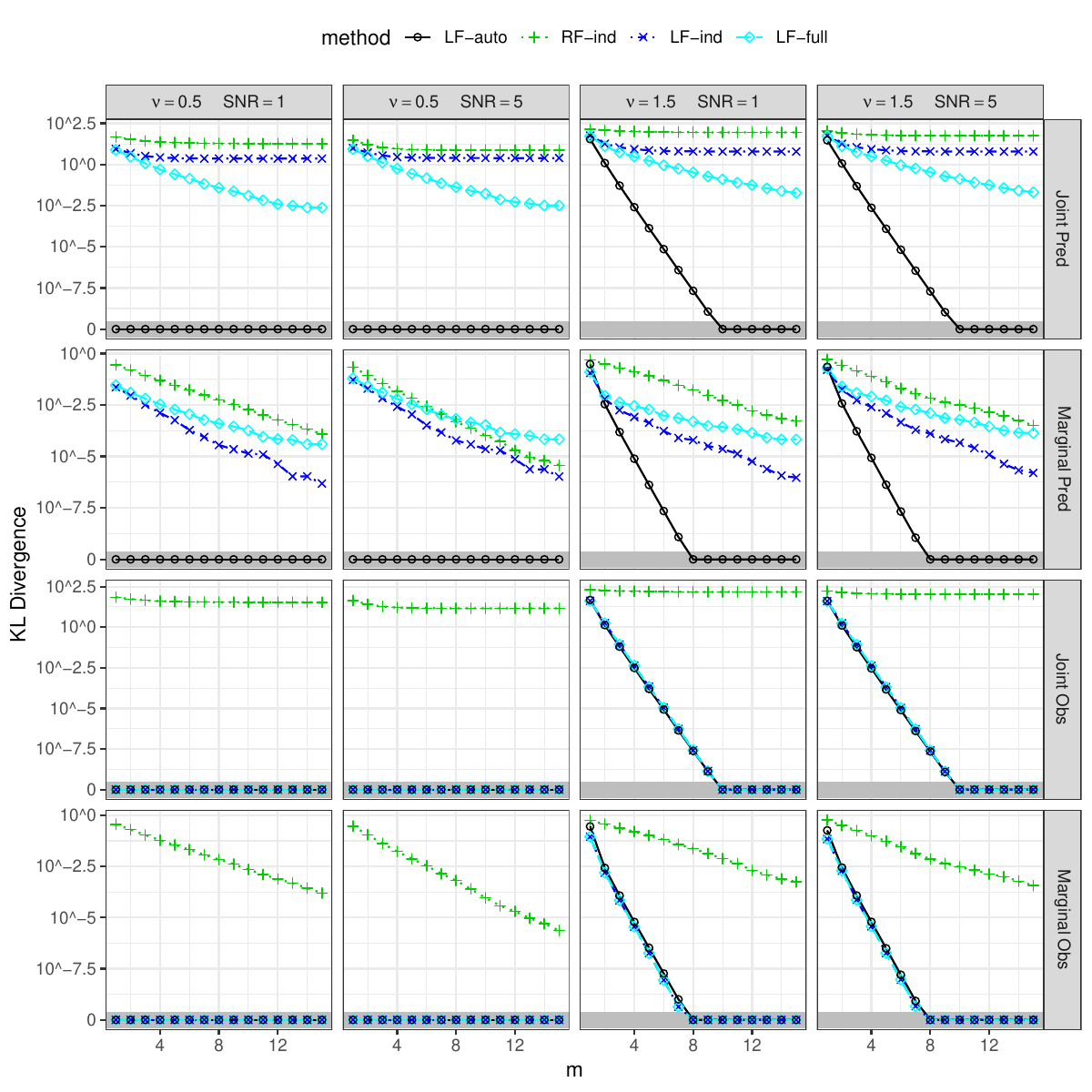}
  \caption{For a GP in \textbf{one dimension}, KL divergences as a function of conditioning-set size $m$. Rows correspond to average KL divergence for $\adens(\by_p|\bz_o)$ (Joint Pred), $\adens(\by_o|\bz_o)$ (Joint Obs), $\adens(\by_{i}|\bz_o)$ for $i \in p$ (Marginal Pred), and $\adens(\by_{i}|\bz_o)$ for $i \in o$ (Marginal Obs), respectively.
  We used a modified log scale for the y-axes, with values below $10^{-10}$ treated as zero and indicated by grey bars at the bottom.}
\label{fig:KL1D}
\end{figure}

\subsection{Numerical comparison in 2-D}

On the unit square, $\domain = [0,1]^2$, we used $n_O=n_P=4{,}900$ and effective range $0.15$, averaging over 20 repetitions. All methods used maxmin ordering. Figure \ref{fig:KL2D} shows the results of the simulations. LF-full is not computationally scalable (see Section \ref{sec:timingcomp}) but performed best in terms of accuracy, because it conditions only on latent variables (which contain more information about the process of interest than the noisy response variables). RF-full performed well on both joint and marginal accuracy measures.
For approaches using independent conditioning for the prediction locations (RF-ind and LF-ind), the joint KL divergence at the prediction locations did not converge to zero, but these methods were more competitive with the other methods on marginal measures, as expected. 

\begin{figure}[tp!]
\centering\includegraphics[width =.9\linewidth]{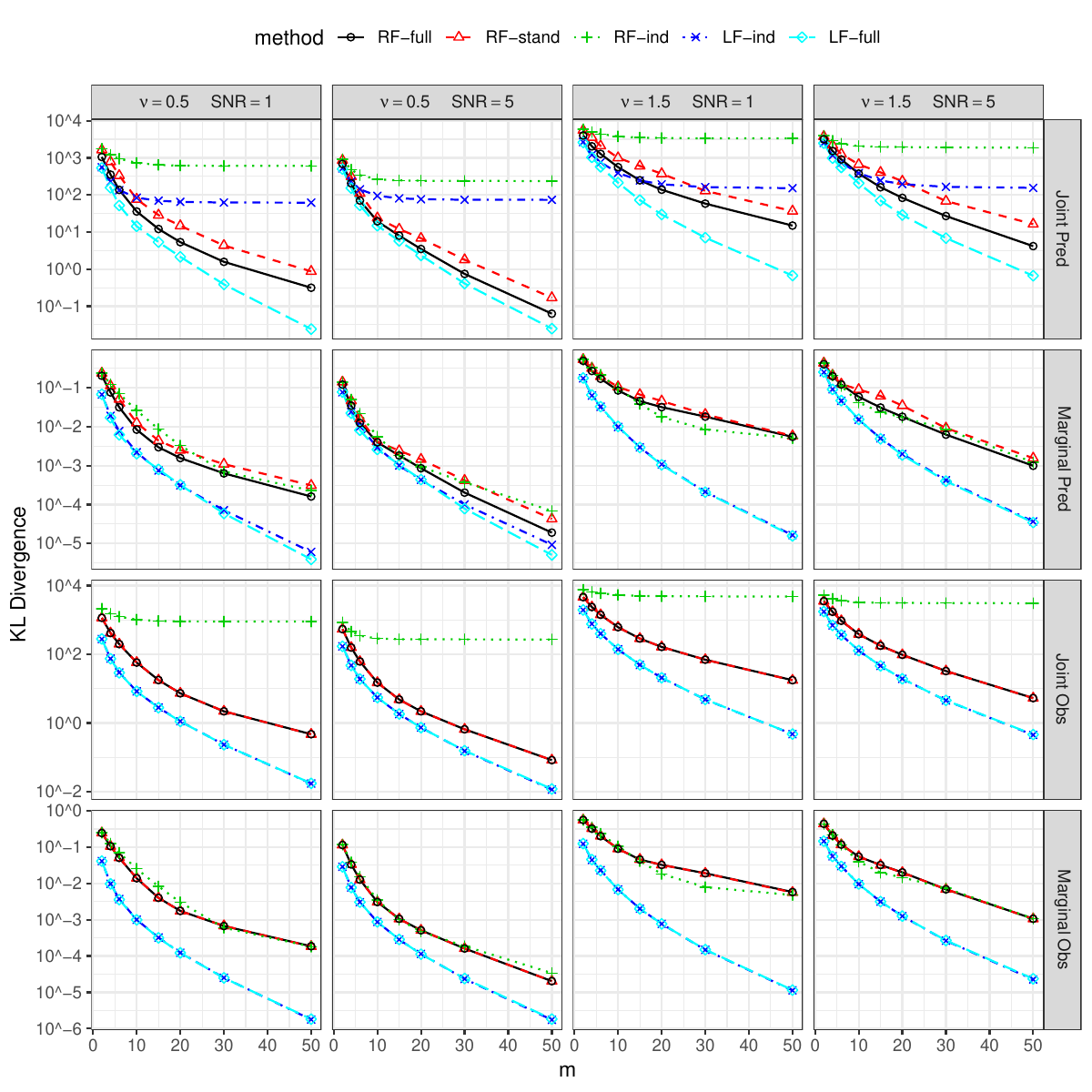}
  \caption{For a GP in \textbf{two dimensions}, KL divergences (on a log scale) as a function of conditioning-set size $m$. Rows correspond to average KL divergence for $\adens(\by_p|\bz_o)$ (Joint Pred), $\adens(\by_o|\bz_o)$ (Joint Obs), $\adens(\by_{i}|\bz_o)$ for $i \in p$ (Marginal Pred), and $\adens(\by_{i}|\bz_o)$ for $i \in o$ (Marginal Obs), respectively.}
\label{fig:KL2D}
\end{figure}

\subsection{Timing comparison \label{sec:timingcomp}}

We also carried out a timing study that examined the time for computing $\bU$ and $\bV$ on a unit square. Figure \ref{fig:timing} shows median computation times from five repetitions on a 4-core machine (Intel Core i7-3770) with 3.4GHz and 16GB RAM.
Consistent with our theoretical results, the time for computing $\bU$ increased roughly linearly with $n$, and was similar for all methods for given $n$ and $m$. (The time was slightly longer for the RF methods for small $m$, but this is solely due to an inefficiency in our RF code.) For the response-first methods, the time for computing $\bV$ was negligible relative to that for computing $\bU$. For LF-ind and LF-full, the time for computing $\bV$ using a fill-reducing permutation increased roughly between $\order(n_O^{3/2})$ and $\order(n_O^2)$, and the computation failed for large $n_O$ due to memory limitations. Computing $\bV_{oo}$ based on reverse ordering was even slower (see Figure \ref{fig:timing_noperm}).

\begin{figure}[tp!]
	\centering
	\begin{subfigure}{.48\textwidth}
	\centering
	\includegraphics[width =.97\linewidth]{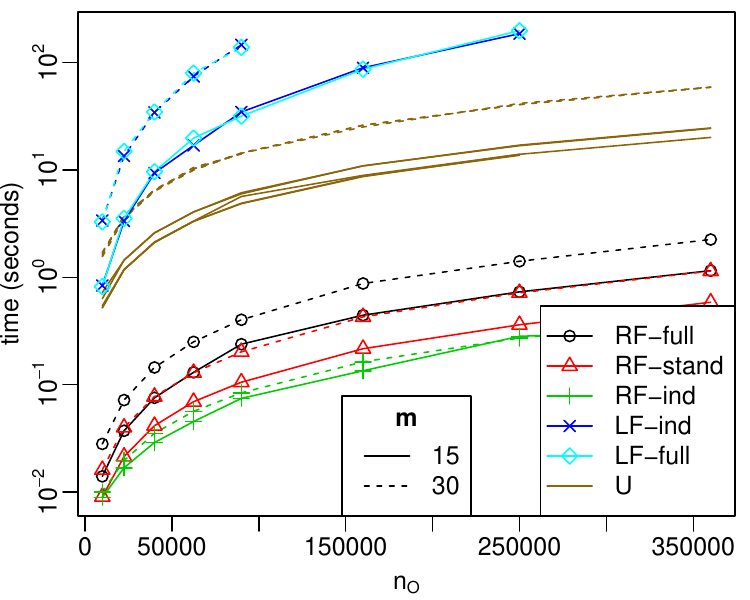}
	\caption{log scale}
	\end{subfigure}%
\hfill
	\begin{subfigure}{.48\textwidth}
	\centering
	\includegraphics[width =.97\linewidth]{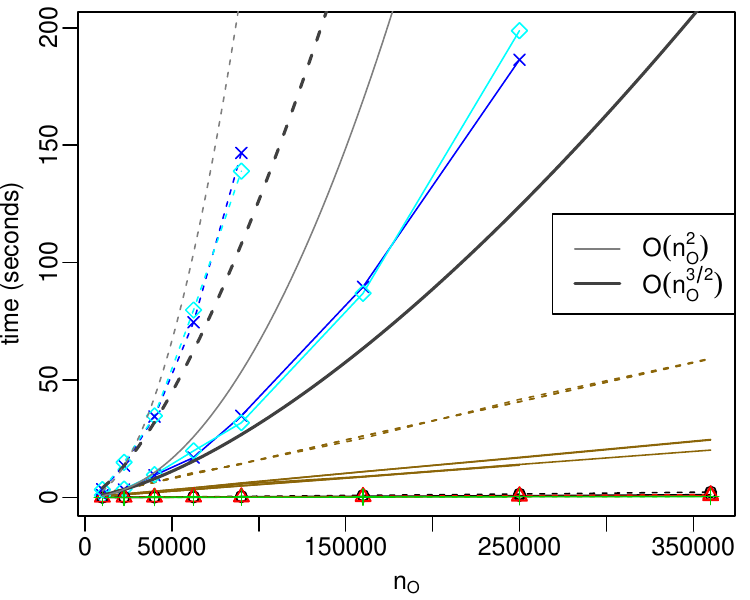}
	\caption{original scale}
	\end{subfigure}
  \caption{Time for computing $\bU$ and $\bV$ for $n_O = n_P$ observed and prediction locations on a unit square, as a function of $n_O$. Time for computing $\bU$ was similar for all methods. For RF methods, time for computing $\bV$ using \eqref{eq:Urf} was negligible. For LF methods, $\bV$ was computed using \eqref{eq:Vlf} and using the approximate-minimum-degree fill-reducing permutation for $\bW_{oo}$.}
\label{fig:timing}
\end{figure}

\subsection{Comparison for large $n$\label{sec:largen}}

We further compared the scalable response-first methods for large $n_O = n_P$, with smoothness $\nu=0.5$ and effective range $0.15$ on a unit square. Two modifications to previous comparisons were necessary due to the large data size. First, we simulated the GP values on a regular $1{,}000 \times 1{,}000$ grid, using a regular subgrid of size $n_P$ as $\locs_p$ and subsampling $n_O$ of the remaining grid points as $\locs_o$. Second, as it was impossible to compute the exact KL divergence, we approximated it by subtracting $\log \adens(\by|\bz_o)$ for each method from $\log \adens(\by|\bz_o)$ as approximated by a ``very accurate'' RF-full model with $m=60$, all averaged over ten simulated datasets. As shown in Figure \ref{fig:large_sample}, RF-full was much more accurate than the other methods in all settings.

\begin{figure}[tp!]
	\centering
	\begin{subfigure}{.63\textwidth}
	\centering
	\includegraphics[width =.97\linewidth]{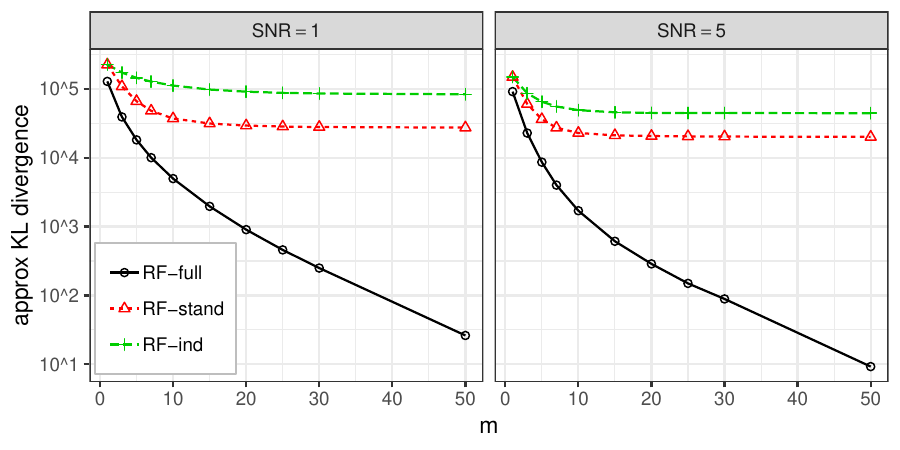}
	\caption{$n_O = n_P = 90{,}000$}
	\end{subfigure}%
\hfill
	\begin{subfigure}{.36\textwidth}
	\centering
    	\includegraphics[width =.98\linewidth]{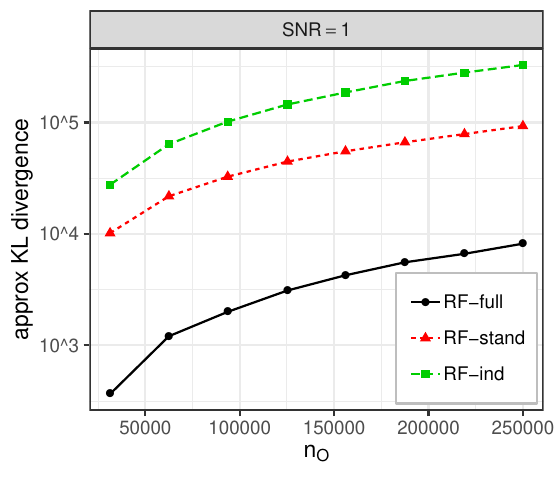}
	\caption{$m=15$}
	\end{subfigure}
  \caption{Approximate KL divergences for the joint distribution $\adens(\by|\bz_o)$ in 2D with smoothness $\nu = 0.5$. (a) Fixed $n_O = n_P = 90{,}000$, varying $m$. (b) Fixed $m=15$, increasing $n_O=n_P$ from $30{,}000$ to $250{,}000$ under in-fill (fixed-domain) asymptotics}
\label{fig:large_sample}
\end{figure}

\subsection{Heaton comparison\label{sec:heaton}}

We compared the response-first methods, all with conditioning-set size $m=15$, to the methods considered in a recent review and comparison paper \citep{Heaton2017}. We used the data from \citet[][Sec.~4.1]{Heaton2017}, which consist of $n_O=105{,}569$ training data and $n_P=44{,}431$ test data simulated from a GP with exponential covariance. For our methods, we estimated the mean as the sample average of the training data, and then estimated the process variance, noise variance, and range parameter (with true values 16.4, 0.05, and 4/3, respectively) by maximizing the likelihood approximated by sparse general Vecchia \citep{Katzfuss2017a} on a randomly chosen training subset of size 10,000. 

We compared to the results of the top five methods reported in \citet[][Tab.~2]{Heaton2017}, based on the RMSE, the continuous rank probability score (CRPS), and the computation time. Timing results in \citet[][]{Heaton2017} were obtained on the Becker computing
environment at Brigham Young University; the times for maximizing the likelihood and computing  predictions for our methods were obtained on a basic desktop computer (Intel Core i5-3570 CPU @ 3.4GHz), ignoring some set-up costs.
Only marginal predictions were considered in \citet{Heaton2017}; for our methods, we also computed the average log score for 10 randomly selected subsets of size 500 of the test data.
CRPS and log score are proper scoring rules that evaluate the approximation error in the predictive distribution \citep[e.g.,][]{Gneiting2014}, and simultaneously reward accurate point prediction (i.e., posterior mean) and accurate uncertainty quantification.

\begin{table}[tbp]
\centering
\small
\begin{tabular}{l|l|rrrr}
 & Method & RMSE & CRPS & JLS & Time (min)\\ 
  \hline
\multirow{3}{*}{$m=15$} & RF-full & \textbf{0.82} & \textbf{0.43} & \textbf{368.5} & 0.47 \\ 
  & RF-stand & 0.83 & \textbf{0.43} & \textbf{368.5} & 0.45 \\ 
  & RF-ind & 0.87 & 0.45 & 784.7 & 0.41 \\ 
  \hline
\multirow{5}{*}{Heaton} &   NNGP ($m\!=\!15$) & 0.88 & 0.46 & & 1.99\\     
 & MRA ($m\!>\!500$) & 0.83 & \textbf{0.43} & & 13.57\\  
 & LatticeKrig & 0.87 & 0.45 & & 25.58\\
 & Partition  & 0.86 & 0.47 & & 77.56\\  
 & SPDE & 0.86 & 0.59 & & 138.34
\end{tabular}
\caption{Comparison using data from \citet[][Sec.~4.1]{Heaton2017}, with lowest (i.e., best) scores in bold. First three rows: Our methods with conditioning-set size $m=15$. Bottom five rows: Results for the five best methods taken directly from \citet[][Tab.~2]{Heaton2017}. CRPS: continuous rank probability score. JLS: Joint log score for test subsets.} 
\label{tab:heaton}
\end{table}

The results are shown in Table \ref{tab:heaton}. RF-full and RF-stand had similar scores, because the assumed noise level was negligible. Both approaches outperformed all other methods; only MRA achieved comparable scores, but required much larger $m$ and computation time. (A separate, more thorough comparison to MRA can be found in Section \ref{sec:addcomp}.) The NNGP results reported in \citet{Heaton2017} were obtained using a variant called NNGP-conjugate in \citet{Finley2017}, which can be viewed as a Bayesian version of RF-ind; indeed, the marginal scores for the two methods were very similar. The joint log score for RF-ind was much worse than for RF-full and RF-stand.

\section{Application to satellite data \label{sec:realdata}}

We applied the scalable response-first methods from Table \ref{tab:methods} to Level-2 bias-corrected solar-induced chlorophyll fluorescence (SIF) retrievals over land from the Orbiting Carbon Observatory 2 (OCO-2) satellite \citep{OCO2chlorophyll2015}. 
SIF is an important proxy for the amount of biomass produced from photosynthesis \citep{sun2017oco,sun2018overview} and can be used to monitor the productivity of crops \citep{guan2016improving}. The OCO-2 satellite has a sun-synchronous orbit with a period of 99 minutes and approximately repeats its spatial coverage every 16 days. Many remote-sensing satellites follow a similar sun-synchronous orbit, an orbital pattern that produces very dense observations along the trajectory of the orbit but wide gaps in space or time between orbits. This pattern is very common, and it presents a challenge for existing nearest-neighbor-based prediction methods, because the nearest neighbors to any point in space or time will almost always be a sequence of densely packed points along one of the orbits. As we show below, this choice can be suboptimal and produce unrealistic artifacts in predicted maps.

We analyzed chlorophyll fluorescence data collected between August 1 and August 31, 2018 over the contiguous United States. 
During this time period, there were a total of 245{,}236 observations, plotted in Figure \ref{fig:fluordata}. There was little evidence of temporal change during the time period, so we restricted our attention to a purely spatial model. We modeled the data with the spatial Gaussian process
\[
\textstyle z(\bs_i) = \beta_0 + \sum_{j=1}^p \beta_j X_j(\bs_i) + y(\bs_i) + \epsilon_i, \qquad \bs_1,\ldots,\bs_n \in \domain,
\]
where $X_j$ were Gaussian basis functions centered at knots chosen as the first $p=50$ locations in a maxmin ordering of the data locations, which ensured that no two knots were placed close to each other. The basis range was selected to be 637km (10\% of Earth radius). The basis functions were included to capture a large amount of unstructured long-range variability that could not be explained by simple linear functions of latitude and longitude. The parameters $\beta_0,\ldots,\beta_p$ were estimated using least squares.

The residual field (shown in Figure \ref{data_and_residuals}) was modeled as $y(\cdot) \sim \GP(0,K)$, where $K$ was assumed to be an isotropic Mat\'ern covariance function with three parameters: variance, range, smoothness. The noise terms $\epsilon_i$ were assumed to be independent and identically distributed as $N(0,\tau^2)$. For covariance-parameter estimation on the residuals, we used the sparse general Vecchia likelihood \citep{Katzfuss2017a} with maxmin ordering, increasing $m$ up to 40, beyond which the estimates did not change significantly. The estimated parameters were: variance = 0.1097, range = 100.8 km, smoothness = 0.0982, noise variance $\widehat{\tau}^2 = 0.1869$. 

\begin{figure}[tp!]
\centering
\includegraphics[width=.9\textwidth]{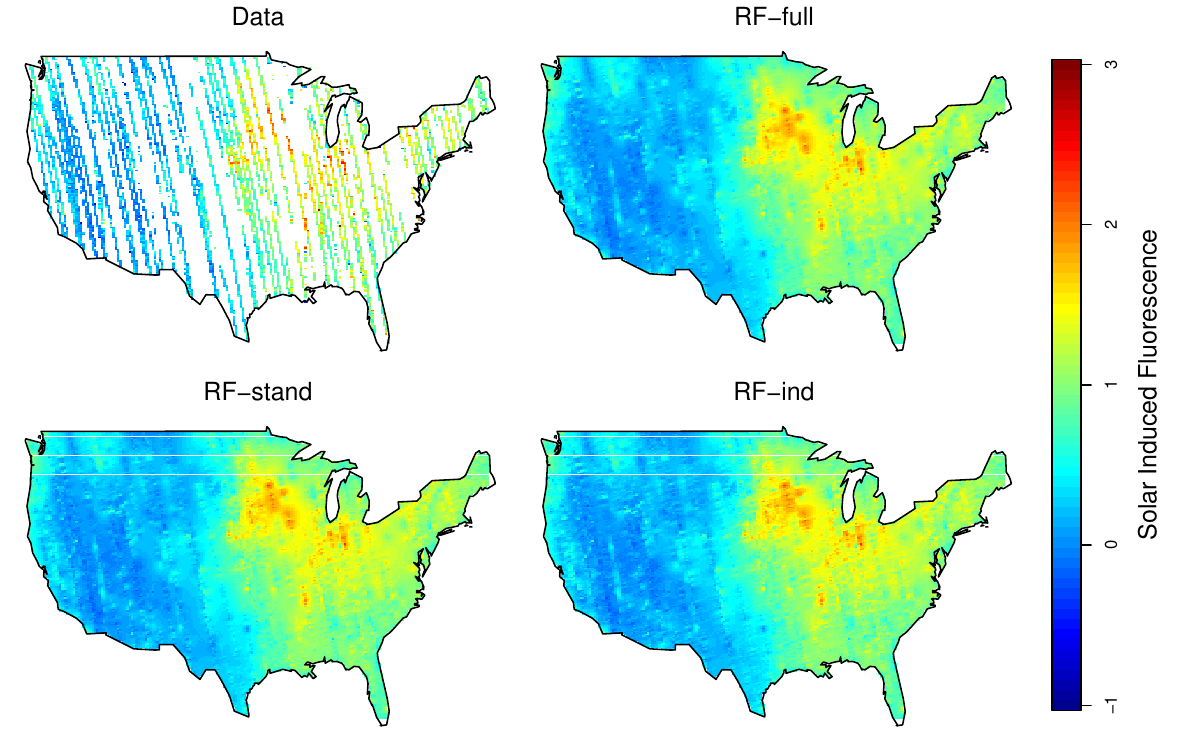}
\caption{\label{fig:fluordata}Satellite data, together with Vecchia predictions using $m=30$ neighbors}
\end{figure}

Using the estimated covariance function and noise variance, we computed predictions for the scalable methods RF-full, RF-stand, and RF-ind (local kriging) at a grid of size $n_P = 24{,}407$ over the contiguous United States. Figure \ref{fig:fluordata} shows predictions (i.e., posterior means of $\beta_0 + \sum_{j=1}^p \beta_j X_j(\bs) + y(\bs)$) for $m=30$ neighbors. Because the observations cover the study region quite well, the predictions using the various methods look similar, as might also be expected from the second row of Figure \ref{fig:KL2D}. Upon closer inspection, however, the RF-ind predictions appear noisier and exhibit a ``streaky'' behavior. Figure \ref{fig:fluordatatexas} shows predictions at a higher resolution of $n_P = 18{,}576$ locations over the state of Texas. We can see that the RF-ind estimates are noisier and have more pronounced discontinuities parallel and perpendicular to the swaths of data. We conjecture that because the data locations are so dense along each swath, for RF-ind, two nearby prediction locations can condition on entirely different sets of observations if the two locations are nearly equidistant from two different swaths. Section \ref{sec:satsupp} contains additional plots showing prediction uncertainties (i.e., posterior standard deviations) for $m=30$ and predictions with $m=60$. For $m=60$, all predicted maps appeared smoother, but the Texas maps still had clearly visible streaks for RF-ind.

\begin{figure}[tp!]
\centering
\includegraphics[width=0.9\textwidth]{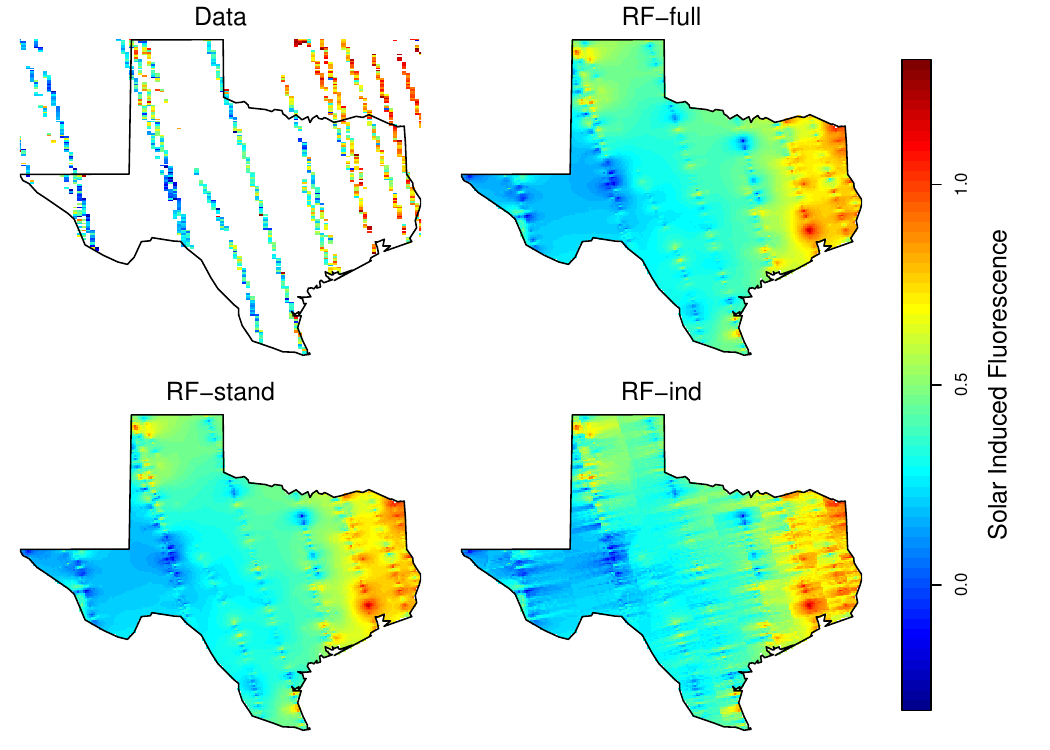}
\caption{\label{fig:fluordatatexas}Satellite data, together with Vecchia predictions over Texas using $m=30$ neighbors}
\end{figure}

We also compared the prediction accuracy of RF-full, RF-stand, and RF-ind in two cross-validation experiments. First, we selected 10 separate prediction sets, each consisting of 4000 randomly sampled data locations, to evaluate short-range prediction. Second, we held out 10 of the 67 swaths, one at a time, to evaluate long-range prediction. The average held-out swath size was 3{,}493 locations. For each of the two cross-validation experiments, we computed the root mean squared error (RMSE) and the total log score, obtained by summing the negative log predictive densities for each held-out test set. The log scores are reported relative to the lowest-achieved log score. 

The resulting prediction scores are shown in Figure \ref{fig:fluorcomp}. For all settings, RF-full performed best. 
RF-ind (i.e., local kriging) was not competitive in terms of long-range predictions on the swath test sets. For the random test sets, RF-stand and local kriging performed similarly, with both methods roughly requiring $m=20$ to achieve the same accuracy as RF-full with $m=10$. These differences can be substantial in terms of computation times, which scale cubically in $m$.
While the absolute differences in the RMSE values were not large, this was at least partially due to all comparisons being carried out relative to the (very noisy) test data $\bz_p$, as the true fluorescence values $\by_p$ are unknown. The ``convergence'' of RF-full, with similar values for $m=20$ as for $m=40$, indicates that even the exact GP without any approximation likely would not achieve significantly lower RMSE.
Differences in the log scores were more pronounced, indicating that our new RF-full method can substantially outperform local kriging in terms of uncertainty quantification.

\begin{figure}
\centering 
\includegraphics[trim=0mm 4mm 0mm 3mm, clip, width=\textwidth]{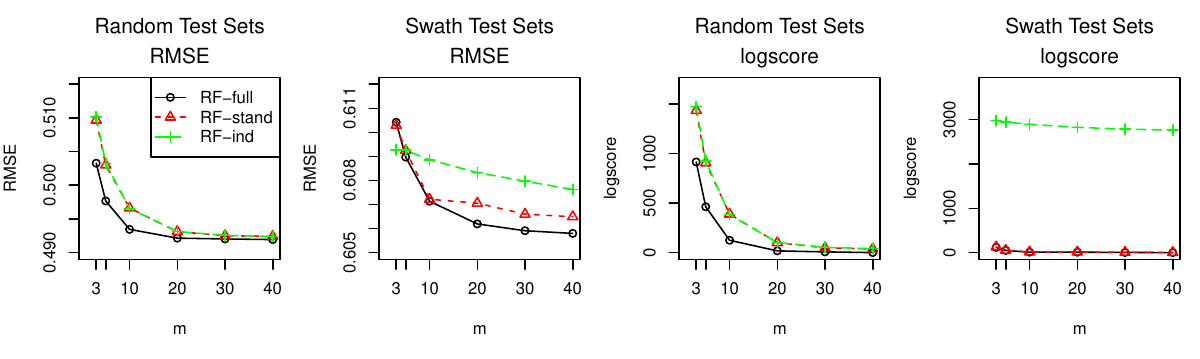}
\caption{\label{fig:fluorcomp} For fluorescence data, comparison of prediction scores as a function of conditioning-set size $m$}
\end{figure}

\section{Conclusions \label{sec:conclusions}}

Vecchia approximation of Gaussian processes (GPs) is a powerful computational tool for fast analysis of large spatial datasets. While Vecchia approximations have been very popular for likelihood approximations, their use for the very important task of GP prediction or kriging had not been fully examined. Here, we proposed a general Vecchia framework for GP predictions, which includes as special cases some existing and several novel computational approaches. We studied the accuracy and computational properties of the prediction methods both theoretically and numerically. In the case of unknown hyperparameters, all methods can be extended straightforwardly as described in Appendix \ref{sec:predictionsunknown}.

Based on our results, we make the following recommendations, which are also summarized briefly in Table \ref{tab:methods}. On a one-dimensional domain, LF-auto clearly had the best performance in all of the settings we considered. The auto-regressive structure in LF-auto also affords linear computational scaling, and so we recommend LF-auto without any qualifications when the domain is one-dimensional. 
In two dimensions, we generally recommend RF-full, as it scales linearly and performed well on all accuracy measures. RF-stand is less accurate for noisy data, but has some computational advantages when the number of prediction locations is much smaller than the number of observations. LF-full can be very accurate, but it does not scale linearly in the data size.
Local kriging (RF-ind) is fast and can provide accurate marginal predictive distributions, but it ignores dependence in the joint predictive distributions. LF-ind does not scale linearly and its joint predictive distributions at unobserved locations were often less accurate than those from RF-full.
These inferential limitations are evident in Figure \ref{fig:1Dillus} and in the top rows of Figures \ref{fig:KL1D} and \ref{fig:KL2D}.

The methods and algorithms proposed here are implemented in the R package \texttt{GPvecchia} \citep{GPvecchia}, with default settings reflecting the recommendations in the previous paragraph. In principle, our methods and code are applicable in more than two dimensions, but a thorough investigation of their properties in this context is warranted. For example, a follow-up paper \citep{Katzfuss2020} shows that Vecchia-based approximations, using our observed-prediction ordering and appropriate extensions, can be highly accurate for computer-model emulation in higher dimensions.
A further follow-up paper \citep{Zilber2019} extends our methods to Vecchia-Laplace approximations of generalized GPs for non-Gaussian spatial data.

\footnotesize
\appendix
\section*{Acknowledgments}

Katzfuss' research was partially supported by National Science Foundation (NSF) grant DMS--1521676 and NSF CAREER grant DMS--1654083.
Guinness' research was partially supported by NSF grant DMS--1613219 and NIH grant No.\ R01ES027892. The authors would like to thank Anirban Bhattacharya, David Jones, Jennifer Hoeting, and several anonymous reviewers for helpful comments and suggestions. Jingjie Zhang and Marcin Jurek contributed to the R package \texttt{GPvecchia}, and Florian Sch\"afer provided C code for the exact maxmin ordering.

\section{Vector and indexing notation\label{app:notation}}

As an example, define $\by = (\by_1,\by_2,\by_3,\by_4,\by_5)$ as a vector of vectors. Vectors of indices are used for defining subvectors. For example if $o = (4,1,2)$ is a vector of indices, then $\by_o = (\by_4,\by_1,\by_2)$. Unions of vectors are vectors and are defined when the two vectors have the same type and when the ordering is specified. For example, if $\bz = (\bz_1,\bz_2)$, then $\by_o \cup \bz = (\by_4,\by_1, \bz_1, \by_2, \bz_2)$ defines the union of $\by_o$ and $\bz$. Intersection is defined similarly and uses the $\cap$ notation.
The ordering of the elements of the union or intersection can be defined alternatively via an index function $\#$ taking in an element and a vector and returning the index occupied by the element in the vector. Continuing the example above, $\#(\by_4, \by) = 4$, whereas $\#(\by_4, \by_o \cup \bz) = 1$. 
A full description of the vector and indexing notation can be found in \citet[][App.~A]{Katzfuss2017a}.

\section{Computing $\bU$ \label{app:computeU}}

We recapitulate here the formulas for computing $\bU$ from \citet{Katzfuss2017a}. Let $g(i)$ denote the vector of indices of the elements in $\bx$ on which $x_i$ conditions.
Also define $C(x_i,x_j)$ as the covariance between $x_i$ and $x_j$ implied by the true model in Section \ref{sec:GP}; that is, $C(y_i,y_j) = C(z_i,y_j) = K(\bs_i,\bs_j)$ and $C(z_i,z_j) = K(\bs_i,\bs_j) + \indicat_{i=j} \tau^2_i$. Then, the $(j,i)$th element of $\bU$ can be calculated as
\begin{equation}
\label{eq:U}
\bU_{ji} = \begin{cases} d_i^{-1/2}, & i=j,\\ -b_{i}^{(j)} d_i^{-1/2}, & j \in g(i), \\ 0, &\textnormal{otherwise}, \end{cases}
\end{equation}
where $\bb_i'= C(x_i,\bx_{g(i)}) C(\bx_{g(i)},\bx_{g(i)})^{-1}$, $d_i = C(x_i,x_i) - \bb_i' C(\bx_{g(i)},x_i)$, and $b_i^{(j)}$ denotes the $k$th element of $\bb_i$ if $j$ is the $k$th element in $g(i)$ (i.e., $b_i^{(j)}$ is the element of $\bb_i$ corresponding to $x_j$).

\section{Prediction with unknown parameters\label{sec:predictionsunknown}}

In practice, most GP models depend on an unknown parameter vector $\bftheta$, which we will make explicit here. Likelihood approximation for parameter inference is discussed in detail in \citet{Katzfuss2017a}, but we will review it briefly here. 
Integration of $\adens(\bx|\bftheta)$ in \eqref{eq:gvp} with respect to $\by$ results in the following Vecchia likelihood \citep[][Prop.~2]{Katzfuss2017a}:
\begin{equation}
\label{eq:likelihood}
\textstyle -2 \log \adens(\bz_o|\bftheta) = -2 \sum_{i} \log \bU_{ii} + 2\sum_{i} \log \bV_{ii} + \tilde\bz'\tilde\bz - (\bV^{-1}\bU_{\ell,\all}\tilde\bz)'(\bV^{-1}\bU_{\ell,\all}\tilde\bz) + n \log(2\pi),
\end{equation}
where $\bU$ and $\bV$ implicitly depend on $\bftheta$, and $\tilde\bz = \bU_{r,\all}'\bz_o$.
The computational cost for evaluating this Vecchia likelihood is often low, and \citet{Katzfuss2017a} provide conditions on the $g(i)$ under which the cost is guaranteed to be linear in $n$.

This allows for various forms of likelihood-based parameter inference. 
In a frequentist setting, we can compute $\hat\bftheta = \argmax_{\bftheta} \log \adens(\bz_o|\bftheta)$, and then compute summaries of the posterior predictive distribution $\adens(\by|\bz_o) = \normal_n(\bfmu(\hat\bftheta),\bfSigma(\hat\bftheta))$ as described in Section \ref{sec:predictions}. This often has low computational cost but ignores uncertainty in $\hat\bftheta$. An example is given in Section \ref{sec:realdata}.

In a Bayesian setting, given a prior distribution $\dens(\bftheta)$, a Metropolis-Hastings sampler can be used for parameters whose posterior or full-conditional distribution are not available in closed form. At the $(l+1)$th step of the algorithm, one would propose a new value $\bftheta^{(P)} \sim q(\bftheta|\bftheta^{(l)})$ and accept it with probability $\min(1,h(\bftheta^{(P)},\bftheta^{(l)})/h(\bftheta^{(l)},\bftheta^{(P)}))$, where $h(\bftheta,\tilde\bftheta) = \dens(\bftheta)\adens(\bz_o|\bftheta) q(\tilde\bftheta|\bftheta)$. After burn-in and thinning, this results in a sample, say, $\bftheta^{(1)},\ldots,\bftheta^{(L)}$, leading to a Gaussian-mixture prediction: $\adens(\by|\bz_o) = (1/L) \sum_{l=1}^L \normal_n(\bfmu(\bftheta^{(l)}),\bfSigma(\bftheta^{(l)}))$. See \citet[][App.~E]{Katzfuss2017a} for an example for the use of RF-full in this setting. For more complicated Bayesian hierarchical models, inference can be carried out using a Gibbs sampler in which $\by$ is sampled from its full-conditional distribution as described in item 4.\ in Section \ref{sec:predictions}.

\section{Numerical nonzeros in $\bV$\label{app:nonzeros}}

For simplicity, we focus here on RF-full, although numerical nonzeros can similarly occur for RF-stand, LF-full, and LF-ind (see Figure \ref{fig:V2D}). For RF-full, the upper triangle of $\bW$ is at least as dense as the upper triangle of $\bV$.
Specifically, for $j<i$, $\bV_{ji}=\bU_{\ell_j\ell_i}=0$ unless $j \in q_y(i)$.
From \citet[][Prop.~3.2]{Katzfuss2017a}, we have that $\bW_{ji} =0$ unless $j \in q_y(i)$ or $\exists k > i$ such that $i,j \in q_y(k)$. Thus, for any pair $j<i$ such that $j \notin q_y(i)$ but $i,j \in q_y(k)$ for some $k>i$, we generally have $\bV_{ji} = 0$ and $\bW_{ji} \neq 0$.

From the standard Cholesky algorithm, we can derive that the algorithm for $\bV = \rchol(\bW)$ computes $\bV_{ji}$ as
\[
\textstyle\bV_{ji} = (\bW_{ji} - \sum_{k=i+1}^n\bV_{ik}\bV_{jk})/\bV_{ii}.
\]
Thus, for any pair $j<i$ such that $\bV_{ji}=0$ but $\bW_{ji} \neq 0$, we know that $\bW_{ji} = \sum_{k=i+1}^n\bV_{ik}\bV_{jk}$ theoretically, but due to potential numerical error it is not guaranteed that this equation holds exactly. A numerical nonzero is introduced in $\bV$ whenever a rounding error occurs in $\sum_{k=i+1}^n\bV_{ik}\bV_{jk}$, which relies on (potentially many) previous calculations in the Cholesky algorithm. Such numerical nonzeros are avoided by extracting $\bV=\bU_{\ell\ell}$ as a submatrix of $\bU$ (as proposed in Section \ref{sec:responsefirstordering}), instead of explicitly carrying out the Cholesky factorization $\bV = \rchol(\bW)$.

\subsection{Implications for selected inverse \label{app:selinv}}

When $\bV$ is computed by copying a submatrix from $\bU$ to avoid numerical nonzeros, the selected inverse of this $\bV$ is not guaranteed to return the exact posterior variances of $\by$, unless $\bV$ is ``padded'' with zeros, which results in additional costs.
This is because the selected inverse operates on the symbolic nonzero elements; that is, it operates on all elements that have to be computed in the Cholesky, even if they cancel to zero numerically (which is the case for many entries in our case). Denoting by $\bS$ the selected inverse of $\bW$ based on $\bV$, a close look at the Takahashi recursions reveals that for all $j,k$ with $j,k \in q_y(i)$, we need $\bS_{ji}$ and $\bS_{kj}$. The latter element is only calculated if $j \in q_y(k)$. However, if $j \notin q_y(k)$, $\bS_{kj} =\cov(y_j,y_k|\bz)$ will typically be very small (if $m$ is reasonably large), because their corresponding locations will likely be far away from each other and data can be observed in between.
In our experiments, the additional approximation error introduced by the selected inverse was negligible relative to the error introduced by the Vecchia approximation itself. When $m$ becomes large enough for the Vecchia approximation to be accurate, the additional approximation error introduced by SelInv goes to zero as well.

If the exact variances implied by the Vecchia approximation $\adens(\by|\bz_o)$ are desired, they can be computed as $\diag(\var(\by_p|\bz_o)) = ((\bV^{-1}\bI_{\all p})\circ(\bV^{-1}\bI_{\all p}))'\bfone_{n}$. For RF-full and RF-stand, this requires $\order(nmn_P)$ time, as described in Section \ref{sec:complexity}, and so the overall computational complexity would not be increased if $n_P = \order(m^2)$.

\section{Proofs \label{app:proofs}}

In this section, we provide proofs for the propositions stated throughout the article.

\begin{proof}[Proof of Proposition \ref{prop:KLvecchia}]
Part 1 of this proposition is equivalent to Thm.~1 in \citet{Guinness2016a}. We prove the statement here again in a different way, which can be easily extended to prove Part 2.

First, consider a generic Vecchia approximation of a vector $\bx$ of length $n$: $\adens(\bx) = \prod_{i=1}^n \dens(x_i|\bx_{g(i)}) = \prod_{i=1}^n \normal(x_i|\mu_{i|g(i)},\sigma^2_{i|g(i)})$. Then,
\[
\textstyle (-2) \E^\bx \log \adens(\bx) = \sum_{i=1}^n \log \var(x_i|\bx_{g(i)}) + \sum_{i=1}^n \E^\bx w_i^2 + n \log(2\pi),
\]
where $w_i = (x_i - \mu_{i|g(i)})/\sigma_{i|g(i)}$. We have 
$\E(w_i) = 0$, because $\E(\mu_{i|g(i)}) = \E\E(x_i|\bx_{g(i)}) = \E(x_i)$. We also have $\var(w_i)=1$, because
\[
\var(x_i-\mu_{i|g(i)}) = \var \E\big(x_i-\E(x_i|\bx_{g(i)})|\bx_{g(i)}\big) + \var\big(x_i-\E(x_i|\bx_{g(i)})|\bx_{g(i)}\big) = 0 + \sigma^2_{i|g(i)}.
\]
Hence, $w_i \sim \normal(0,1)$, and so $\E^\bx w_i^2 = 1$ and 
\[
\textstyle (-2) \E^\bx \log \adens(\bx) = \sum_{i=1}^n \log \var(x_i|\bx_{g(i)}) + n + n \log(2\pi).
\]
Because the exact density $\dens(\bx)$ is a special case of Vecchia with $g(i) = (1,\ldots,i-1)$, we have
\begin{equation}
\label{eq:KLvecc}
\textstyle \KL\big(\dens(\bx)\|\adens(\bx)\big) = \E^\bx \log \dens(\bx) - \E^\bx \log\adens(\bx) = \frac{1}{2} \sum_{i=1}^n \log \frac{\var(x_i|\bx_{g(i)})}{\var(x_i|\bx_{1:i-1})}.
\end{equation}

\noindent
For $g_1(i) \subset g_2(i)$, we can write, say, $g_2(i) = g_1(i) \cup c(i)$. Using the law of total variance,
\begin{equation}
\label{eq:reducecondvar}
\var(x_i|\bx_{g_1(i)}) = \var(x_i|\bx_{g_1(i)}, \bx_{c(i)}) + \var(\E(x_i|\bx_{g_1(i)}, \bx_{c(i)})|\bx_{c(i)}) \geq \var(x_i|\bx_{g_2(i)}).
\end{equation}

\noindent
Now, Part 1 follows by combining \eqref{eq:KLvecc} and \eqref{eq:reducecondvar} with the assumption that $g_1(i) \subset g_2(i)$ for all $i$.

\noindent
For Part 2, we consider $\bx=(\bx^{(1)},\bx^{(2)})$ with $\bx^{(1)} = (x_1,\ldots,x_{n_1})$ and $\bx^{(2)} = (x_{n_1+1},\ldots,x_{n_1+n_2})$. Then,
\begin{align*}
\textstyle \CKL\big(\dens(\bx^{(2)}|\bx^{(1)})\|\adens(\bx^{(2)}|\bx^{(1)})\big) 
& \textstyle = \int \dens(\bx^{(1)}) \int \dens(\bx^{(2)}|\bx^{(1)}) \log\big(\dens(\bx^{(2)}|\bx^{(1)})/\adens(\bx^{(2)}|\bx^{(1)})\big) d\bx^{(2)} d\bx^{(1)} \\
& \textstyle = \E^\bx \log \dens(\bx^{(2)}|\bx^{(1)}) - \E^\bx \log \adens(\bx^{(2)}|\bx^{(1)}) \\
& \textstyle = \frac{1}{2} \sum_{i=n_1+1}^{n_1+n_2} \log \frac{\var(x_i|\bx_{g(i)})}{\var(x_i|\bx_{1:i-1})},
\end{align*}
where the last equality can be shown almost identically to \eqref{eq:KLvecc} above, noting that
$\adens(\bx^{(2)}|\bx^{(1)}) = \adens(\bx)/\adens(\bx^{(1)}) = \prod_{i=n_1+1}^{n_1+n_2} \dens(x_i|\bx_{g(i)})$. Part 2 follows by combining this result and \eqref{eq:reducecondvar} with the assumption that $g_1(i) \subset g_2(i)$ for all $i=n_1+1,\ldots,n_1+n_2$.
\end{proof}

\begin{proof}[Proof of Proposition \ref{prop:KLordering}]
For all parts of this proposition, note that all variables in the model are conditionally independent of $z_j$ given $y_j$, and so conditioning on $y_j$ is equivalent to conditioning on $y_j$ and $z_j$. 
For Part 1, we can thus verify easily that $g(i)$ is equivalent to $(1,\ldots,i-1)$, for all $i$ and all methods under consideration, and so Part 1 follows from \eqref{eq:KLvecc}. 
Using Proposition \ref{prop:KLvecchia}, the proof for all other parts simply consists of showing that certain conditioning vectors contain certain other conditioning vectors \citep[similar to][Prop.~4]{Katzfuss2017a}.
For example, for Part 3, all response-first methods are based on the ordering $\bx = (\bz_o,\by_o,\by_p)$ with nearest-neighbor conditioning (under some restrictions for RF-stand and RF-ind), and so it is easy to see that $g_m(i) \subset g_{m+1}(i)$ for all $i \in \ell_p=(n+1,\ldots,n+n_P)$. LF-full and LF-ind can equivalently be defined based on the ordering $(\by_o,\bz_o,\by_p)$, in which case we also have $g_m(i) \subset g_{m+1}(i)$ for all $i \in \ell_p=(n+1,\ldots,n+n_P)$.
For Part 5, note that in RF-stand and RF-ind, $\by_p$ does not condition on $\by_o$. Hence, the distribution $\adens(\by_p|\bz_o)$ is equivalent to the distribution obtained under a simplified Vecchia approximation based on $\bx=(\bz_o,\by_p)$ (i.e., with $\by_o$ removed completely). It is straightforward to show that Part 5 holds for this simplified approximation.
For Part 6, note that $g(i)$ is the same for RF-full and RF-stand for all $i=1,\ldots,n$. For $i \in p$, letting $a(i) = q_y^{\text{RF-full}}(i) \cap o$, we have $\bx_{g^{\text{RF-full}}(i)}=(\by_{q_y^{\text{RF-stand}}},\by_{a(i)})=(\by_{q_y^{\text{RF-stand}}},\by_{a(i)},\bz_{a(i)})$ and $\bx_{g^{\text{RF-stand}}(i)}=(\by_{q_y^{\text{RF-stand}}},\bz_{a(i)})$, and so $g^{\text{RF-stand}}(i) \subset g^{\text{RF-full}}(i)$ for all $i \in p$.
For Part 7, note that a GP with exponential covariance in 1-D is a Markov process, and so in \eqref{eq:lf} with $q_y(i)=i-1$ and left-to-right ordering, we have $\dens(y_i|\by_{q_y(i)})=\dens(y_i|y_{1},\ldots,y_{i-1})$ and hence $\adens(\bx) = \dens(\bx)$.
\end{proof}

\footnotesize
\bibliographystyle{apalike}
\bibliography{vecchiabib}

\newpage

\renewcommand{\thepage}{S\arabic{page}}  
\renewcommand{\thesection}{S\arabic{section}}   
\renewcommand{\thetable}{S\arabic{table}}   
\renewcommand{\thefigure}{S\arabic{figure}}
\renewcommand{\theequation}{S\arabic{equation}}

\setcounter{page}{1}
\setcounter{section}{0}
\setcounter{table}{0}
\setcounter{figure}{0}
\setcounter{equation}{0}

\section*{\LARGE Supplementary Material}

\section{Sparsity and computation time\label{sec:compadditional}}

Figure \ref{fig:V2D} further illustrates the sparsity and numerical nonzeros in $\bV$ examined in Figure \ref{fig:Vsparsity}. For all methods except RF-ind (for which $\bW$ and $\bV$ are diagonal), the use of the block forms in \eqref{eq:Urf} or \eqref{eq:Vlf} was very useful in avoiding numerical nonzeros and increasing the sparsity of $\bV$.

\begin{figure}[h!]
	\begin{subfigure}{.19\textwidth}
	\centering
	\includegraphics[width =1\linewidth]{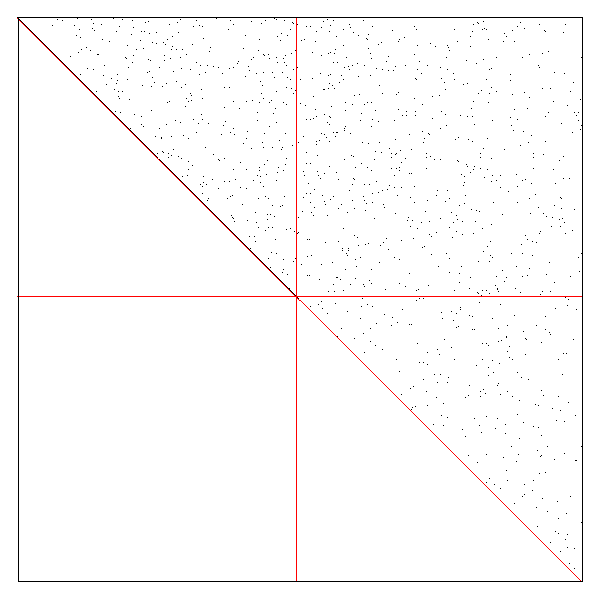}
	\caption{RF-full (1\%)}
	\end{subfigure}%
\hfill
	\begin{subfigure}{.19\textwidth}
	\centering
	\includegraphics[width =1\linewidth]{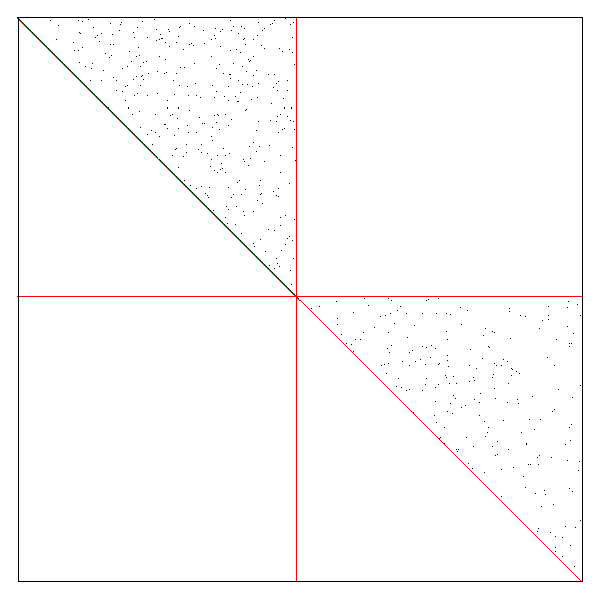}
	\caption{RF-stand (1\%)}
	\end{subfigure}%
\hfill
	\begin{subfigure}{.19\textwidth}
	\centering
	\includegraphics[width =1\linewidth]{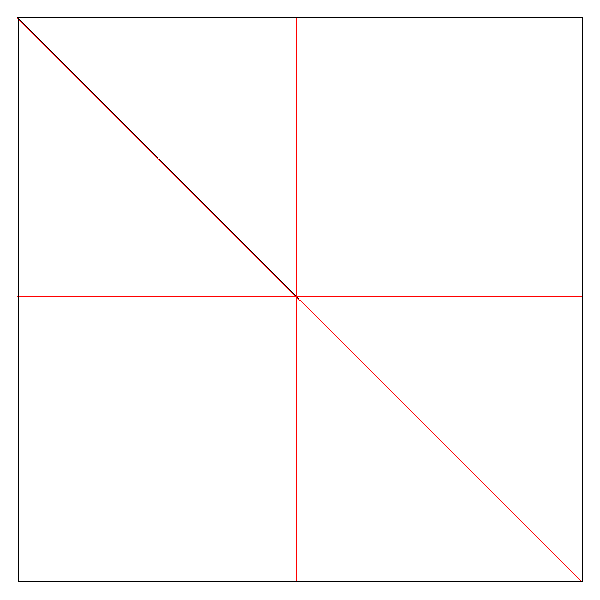}
	\caption{RF-ind (0\%)}
	\end{subfigure}%
\hfill
	\begin{subfigure}{.19\textwidth}
	\centering
	\includegraphics[width =1\linewidth]{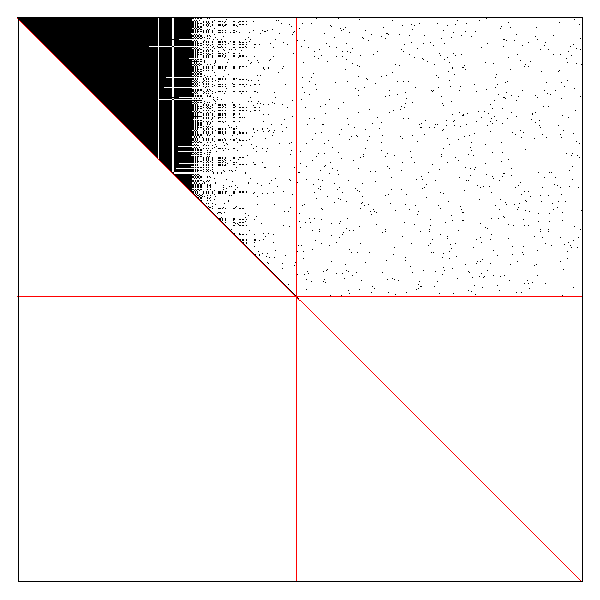}
	\caption{LF-ind (42\%)}
	\end{subfigure}%
\hfill
	\begin{subfigure}{.19\textwidth}
	\centering
	\includegraphics[width =1\linewidth]{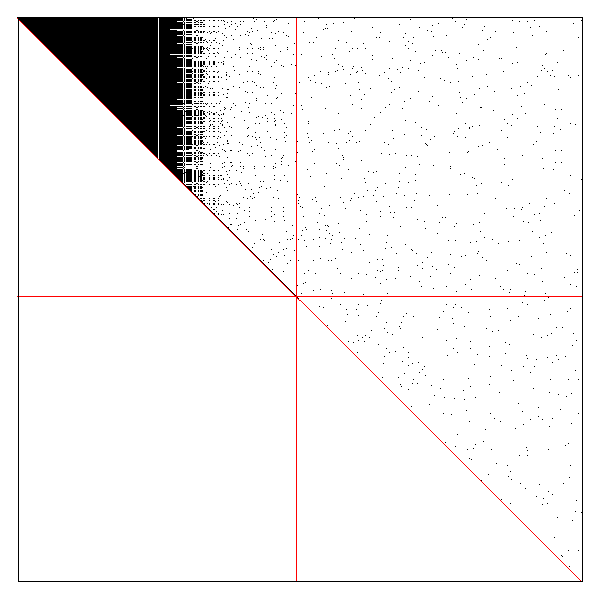}
	\caption{LF-full (40\%)}
	\end{subfigure}

	\begin{subfigure}{.19\textwidth}
	\centering
	\includegraphics[width =1\linewidth]{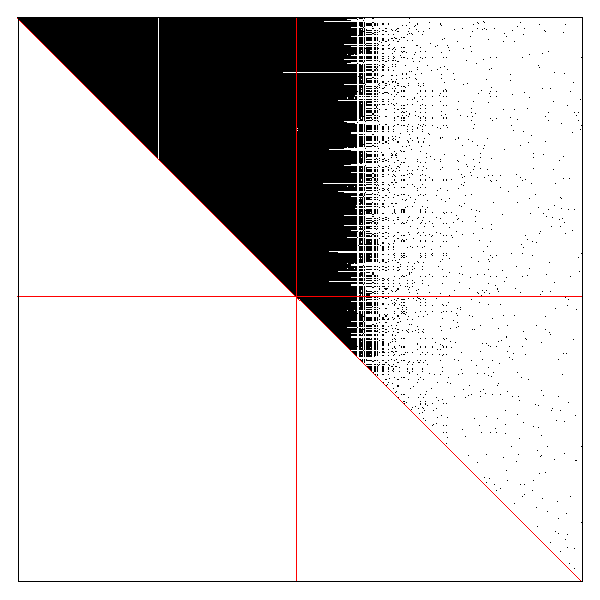}
	\caption{RF-full (100\%)}
	\end{subfigure}%
\hfill
	\begin{subfigure}{.19\textwidth}
	\centering
	\includegraphics[width =1\linewidth]{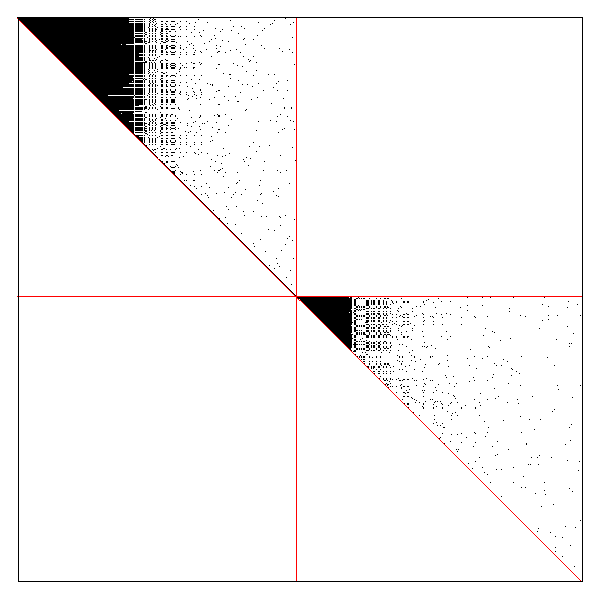}
	\caption{RF-stand (23\%)}
	\end{subfigure}%
\hfill
	\begin{subfigure}{.19\textwidth}
	\centering
	\includegraphics[width =1\linewidth]{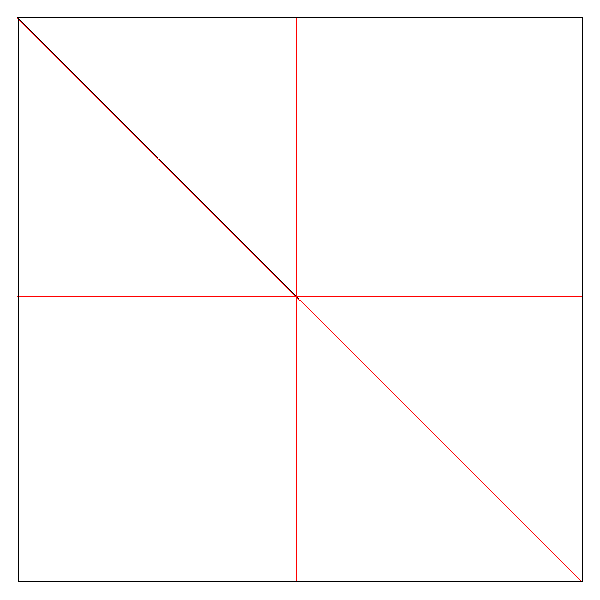}
	\caption{RF-ind (0\%)}
	\end{subfigure}%
\hfill
	\begin{subfigure}{.19\textwidth}
	\centering
	\includegraphics[width =1\linewidth]{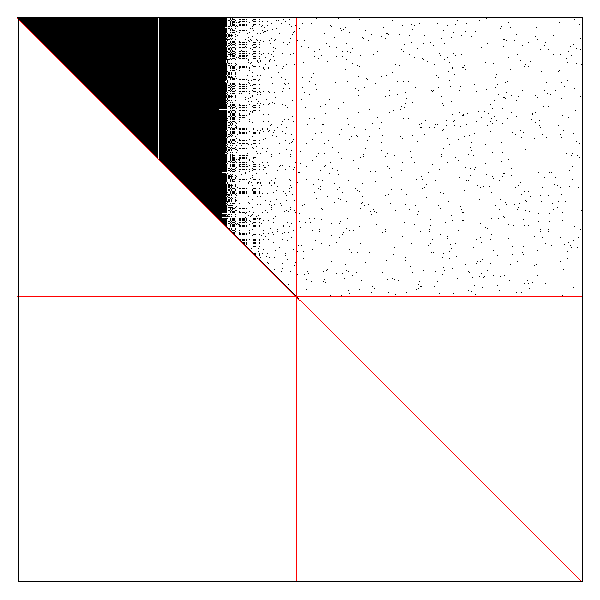}
	\caption{LF-ind (60\%)}
	\end{subfigure}%
\hfill
	\begin{subfigure}{.19\textwidth}
	\centering
	\includegraphics[width =1\linewidth]{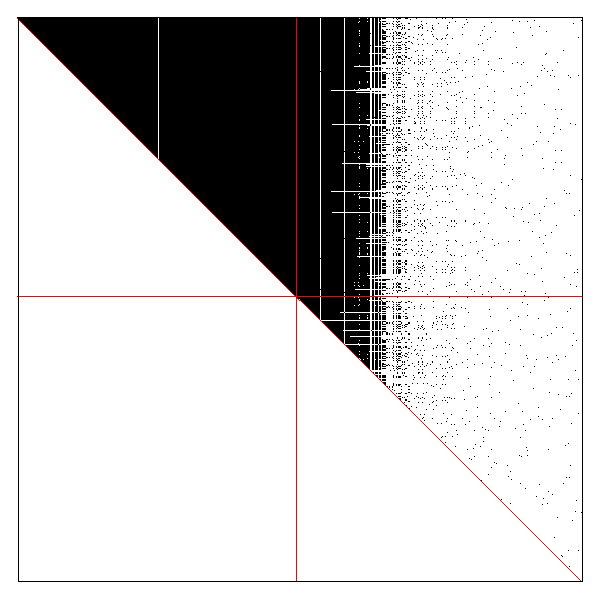}
	\caption{LF-full (100\%)}
	\end{subfigure}%
  \caption{Sparsity structure of $\bV = \rchol(\bW)$ for $n_O=n_P= 1{,}000$ and $m=10$ on a unit square using maxmin ordering. Top row: matrices obtained using \eqref{eq:Urf} or \eqref{eq:Vlf} based on reverse row-column ordering. Bottom row: obtained using ``brute-force'' Cholesky, also based on reverse row-column ordering. Lines separate blocks corresponding to $o$ and $p$. Percentages indicate density (i.e., proportion of nonzero entries) in the upper triangle of $\bV_{oo}$.}
\label{fig:V2D}
\end{figure}

Figure \ref{fig:timing_noperm} shows times for computing $\bU$ and $\bV$. The figure is similar to Figure \ref{fig:timing}, except that for LF-ind and LF-full, $\bV_{oo}$ was computed based on reverse row-column ordering of $\bW_{oo}$ (i.e., without applying a fill-reducing permutation algorithm). Clearly this reverse ordering led to even higher computation times for LF-ind and LF-full, as might have been expected from Figure \ref{fig:Vsparsity}.

\begin{figure}
	\centering
	\begin{subfigure}{.48\textwidth}
	\centering
	\includegraphics[width =.97\linewidth]{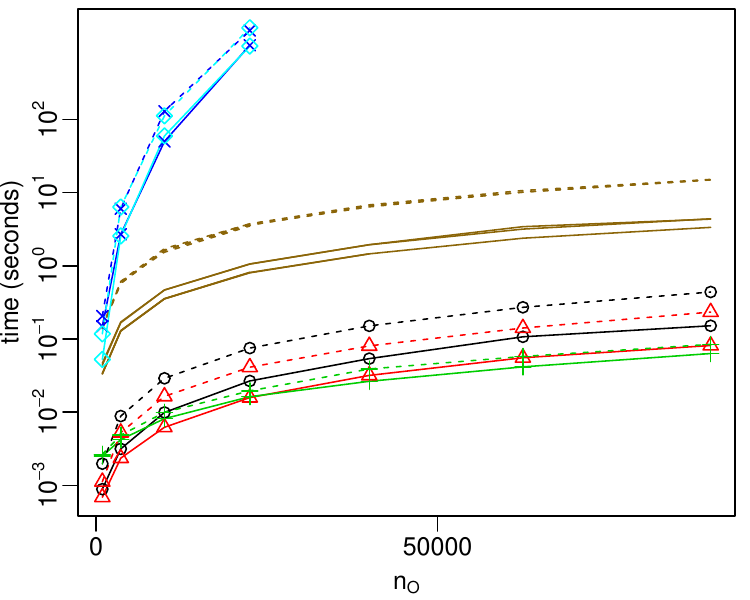}
	\caption{log scale}
	\end{subfigure}%
\hfill
	\begin{subfigure}{.48\textwidth}
	\centering
	\includegraphics[width =.97\linewidth]{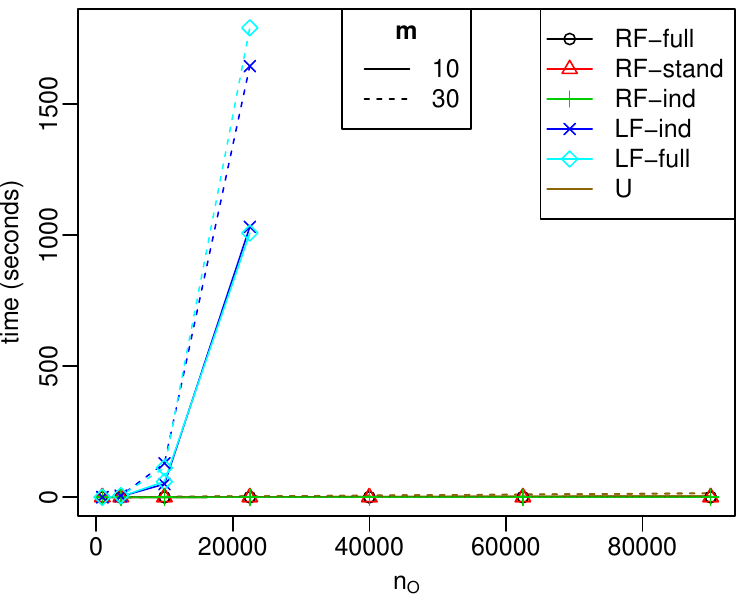}
	\caption{original scale}
	\end{subfigure}
  \caption{Time for computing $\bU$ and $\bV$ for $n_O = n_P$ observed and prediction locations on a unit square, as a function of $n_O$. Time for computing $\bU$ is similar for all methods. For RF methods, time for computing $\bV$ is negligible using \eqref{eq:Urf}. For LF methods, $\bV$ is computed using \eqref{eq:Vlf} and using reverse row-column ordering of $\bW_{oo}$ (i.e., without applying a fill-reducing permutation algorithm)}
\label{fig:timing_noperm}
\end{figure}


\section{Vecchia predictions based on interweaved ordering\label{sec:sgvp}}

We now consider interweaved predictions (IWP), an additional special case of our general Vecchia prediction framework. IWP is based on OP ordering, but the elements of $\by_o$ and $\bz_o$ are interweaved: $\bx = (y_1,z_1,y_2,z_2,\ldots,y_{n_O},z_{n_O},y_{n_O+1},\ldots,y_{n})$, where $\by_p = (y_{n_O+1},\ldots,y_{n})$. The resulting approximation is
\begin{equation*}
\label{eq:iw-b}
\adens(\bx)  = \textstyle\left(\prod_{i=1}^{n} \dens(y_i | \by_{q_y(i)}, \bz_{q_z(i)} )  \right) \left(\prod_{i \in o} \dens(z_i | y_i )\right).
\end{equation*}
IWP tries to maximize latent conditioning, but conditions on $z_j$ instead where necessary to ensure linear complexity. We set $q_y(i) = q(i)$ for $i \in p$. For $i \in o$, IWP follows the same rules as the sparse general Vecchia (SGV) approach \citep{Katzfuss2017a}: Set $q_y(i) \subset q(i)$ such that $j<k$ can only both be in $q_y(i)$ if $j \in q_y(k)$. The remaining conditioning indices are then assigned to $q_z(i) = q(i) \setminus q_y(i)$.
Typically, we determine the set $q_y(i)$ for $i \in o$ similarly to the SGV in \citet{Katzfuss2017a}: For $i=1,\ldots,n_O$, define $h(i) = \argmax_{j \in q(i)} |q_y(j) \cap q(i)|$, $k_i=\argmin_{\ell \in h(i)} \|\bs_i - \bs_\ell\|$, and then set $q_y(i) = (k_i) \cup (q_y(k_i) \cap q(i))$.

IWP is a natural extension of the SGV likelihood approximation in \citet{Katzfuss2017a} to GP predictions. It is easy to show using \eqref{eq:impliedlik} that the likelihood $\adens(\bz_o)$ implied by IWP is equivalent to that of the SGV, and so the two approaches provide a consistent framework for inference and prediction. Because IWP respects OP ordering, we can use \eqref{eq:Vlf} to compute $\bV$ as in the latent-first methods. Specifically, $\bV_{\all,p}$ can simply be copied from $\bW$, and only $\bV_{oo}=\rchol(\bW_{oo})$ must be computed.

This ensures sparsity of $\bV$ for IWP:
\begin{prop}
\label{prop:sgvp}
For IWP, we have $\bV_{ji} = 0$ unless $j=i$ or $j \in q_y(i)$. Hence, $\bV$ has at most $m$ off-diagonal nonzero elements in each column.
\end{prop}
{\footnotesize
\begin{proof}
First, in the graph terminology used in \citet{Katzfuss2017a}, note that any $y_k$ with $k \in p$ cannot have observed descendants under obs--pred ordering. Thus, using Prop.~3.3 in \citet{Katzfuss2017a}, we simply follow the proof of Prop.~6 in \citet{Katzfuss2017a}, considering only the graph formed by $\{y_i: i \in o\}$, to show that $\bV_{ji} = 0$ unless $j=i$ or $j \in q_y(i)$, for $i \in o = (1,\ldots,n_O)$. It is easy to see from the expression \eqref{eq:Vlf} and the definition of $\bU$ in \eqref{eq:U}, that the same holds for the last $n_P$ columns of $\bV$ (i.e., for $i \in p$). This proves that $\bV_{ji} = 0$ unless $j=i$ or $j \in q_y(i)$.

Further, we have $q_y(i) \subset q(i)$, and we have assumed that $|q(i)|\leq m$. Thus, for SGVP, $\bV$ has at most $m$ off-diagonal nonzero entries in each column.
\end{proof}
}
Using the same arguments as in Section \ref{sec:complexity}, this sparsity guarantees that prediction using IWP has linear complexity in $n$.

\section{Numerical comparison to IWP and MRA\label{sec:addcomp}}

Figure \ref{fig:KL2D_alternate} shows a comparison of RF-full to the MRA (Section \ref{sec:latentfirstordering}) and IWP (Section \ref{sec:sgvp}), carried out on a unit square $\domain=[0,1]^2$ with $n_O=n_P=4{,}900$ and effective range $0.15$. All three methods scale linearly in $n$. 

For the MRA, tuning parameters were chosen according to the default settings of the \texttt{GPvecchia} package, which implements the MRA as a special case of the general Vecchia framework \citep[cf.][]{Katzfuss2017a}. Improved accuracy for the MRA could likely be achieved with more careful choice of the tuning parameters. In addition, Figure \ref{fig:KL2D_alternate} does not adjust for the fact that the MRA can be implemented with $\order(nm^2)$ time complexity, which is a factor of $m$ lower than the complexity of RF-full and IWP.

IWP was the most accurate method for small $m$. It is also provably correct (i.e., the KL divergence is zero) for $m=n-1$. However, for intermediate values of $m$ between 10 and 50 or so, the accuracy of IWP did not improve much, and sometimes even decreased. To see why this might be the case, consider a pair $j<i$ with $i, j \in o$, so that $y_j$ and $z_j$ are ordered before $y_i$ in $\bx$. To ensure sparsity using the SGV rules (see Section \ref{sec:sgvp}), $y_i$ might need to condition on $z_j$ instead of $y_j$, which is equivalent to assuming that $y_i$ is conditionally independent of $y_j$ given $z_j$. As a result, the IWP approximation to $\dens(y_i,y_j)$ might be quite poor.

\begin{figure}
\centering\includegraphics[width =.95\linewidth]{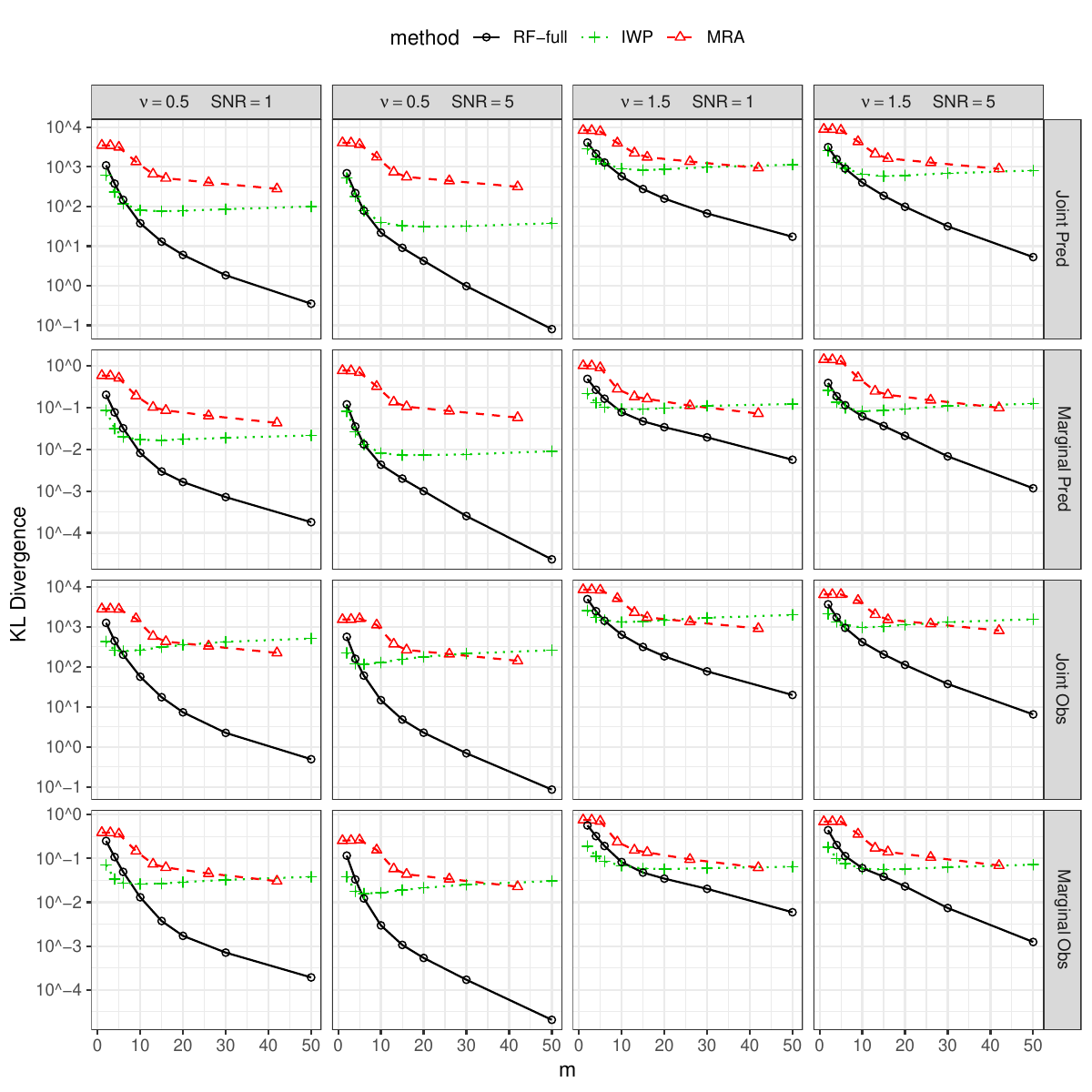}
  \caption{KL divergences (on a log scale) for $\dens(\by_p|\bz_o)$ (Joint Pred), $\dens(\by_{p_i}|\bz_o)$ (Marginal Pred), $\dens(\by_p|\bz_o)$ (Joint Obs), and $\dens(\by_{o_i}|\bz_o)$ (Marginal Obs), based on a GP with Mat\'ern covariance function with smoothness $\nu$, observed with signal-to-noise ratio SNR on a \textbf{unit square}}
\label{fig:KL2D_alternate}
\end{figure}

\section{Satellite data: supplementary plots\label{sec:satsupp}}

Figures \ref{data_and_residuals}--\ref{sd_grid_texas} contain additional plots for our analysis of satellite data in Section \ref{sec:realdata}.

\begin{figure}
\centering
\includegraphics[width=\textwidth]{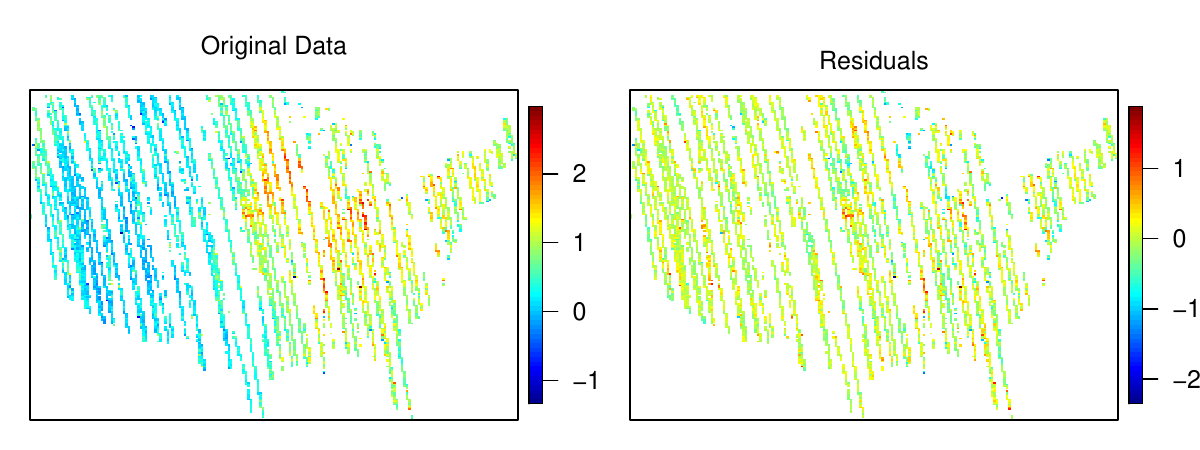}
\caption{\label{data_and_residuals} Original SIF data and residuals after removing Gaussian-basis-function trend}
\end{figure}

\begin{figure}
\centering
\includegraphics[width=\textwidth]{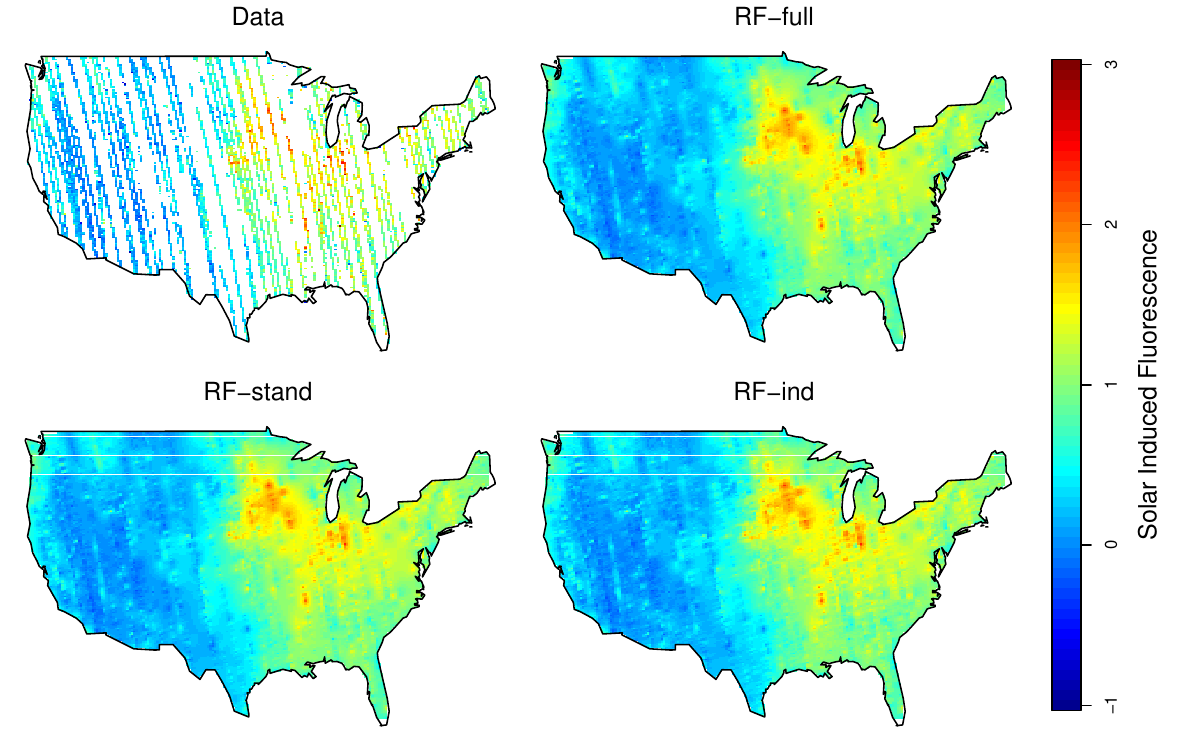}
\caption{\label{pred_grid_usa_m60} Predictions over contiguous USA for $m=60$}
\end{figure}

\begin{figure}
\centering
\includegraphics[width=\textwidth]{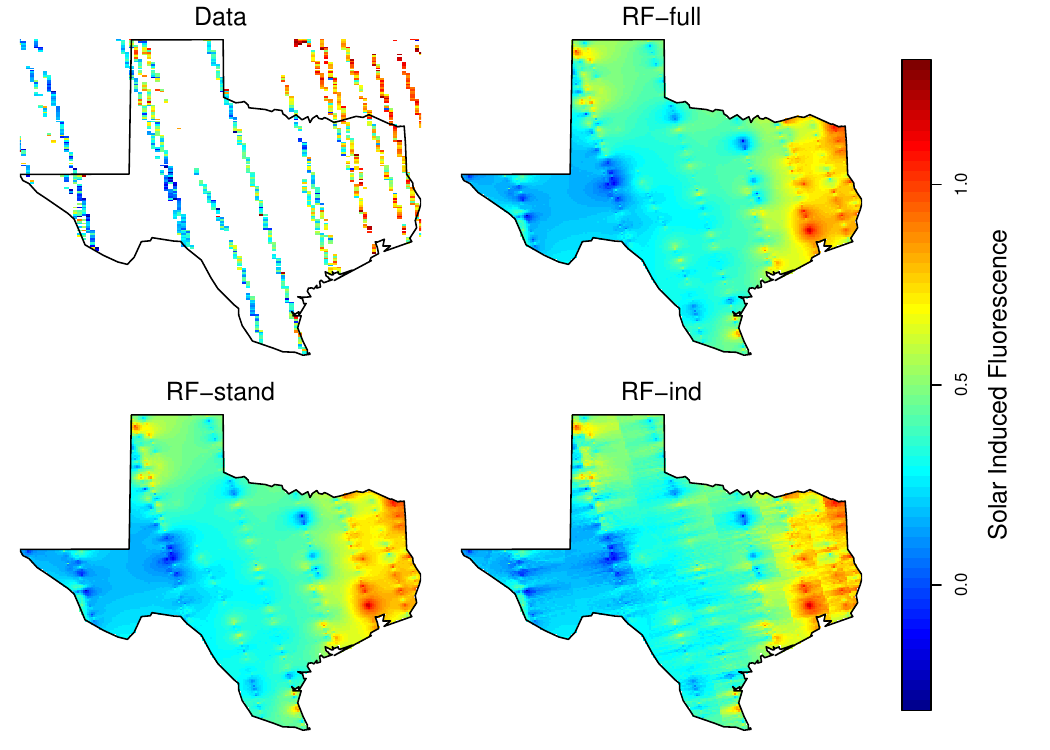}
\caption{\label{pred_grid_texas_m60} Predictions over Texas for $m=60$}
\end{figure}

\begin{figure}
\centering
\includegraphics[width=\textwidth]{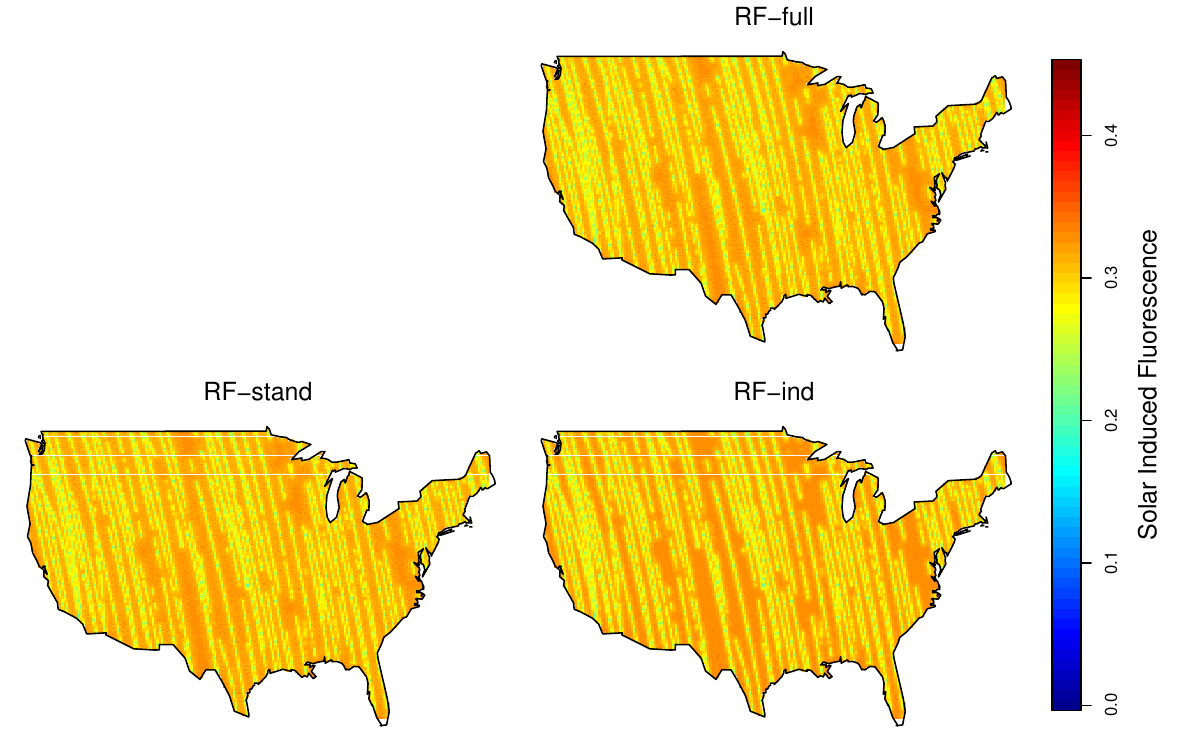}
\caption{\label{sd_grid_usa} Prediction standard deviations over contiguous USA for $m=30$}
\end{figure}

\begin{figure}
\centering
\includegraphics[width=\textwidth]{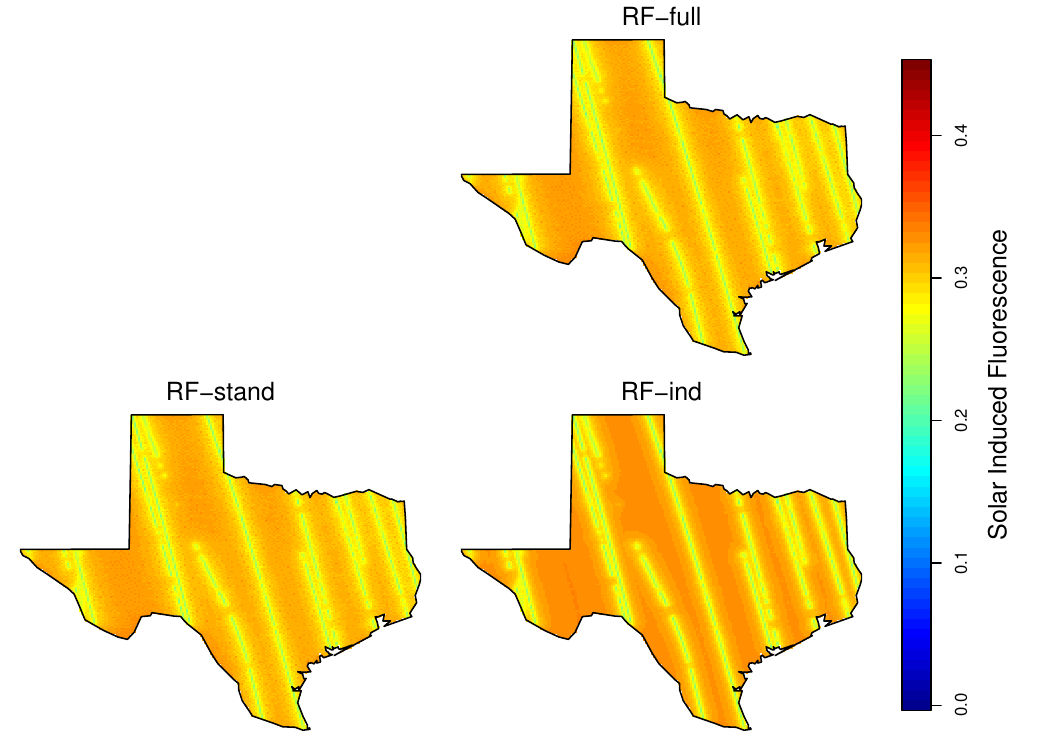}
\caption{\label{sd_grid_texas} Prediction standard deviations over Texas for $m=30$}
\end{figure}


\end{document}